\documentclass[11pt]{article}
\usepackage{fullpage}
\usepackage{url}
\usepackage{xspace}

\usepackage{graphics}
\usepackage[dvips]{epsfig}

\usepackage{amsmath}
\usepackage{amssymb}
\usepackage{amsfonts}
\usepackage{graphicx}
\usepackage[ruled,vlined]{algorithm2e}

\newtheorem{theorem}{Theorem}[section]

\newtheorem{corollary}{Corollary}[section]

\newtheorem{proposition}{Proposition}[section]
\newtheorem{definition}{Definition}[section]

\newtheorem{example}{Example}[section]

\newcommand{\qed}{\hfill $\Box$ \bigbreak}
\newenvironment{proof}{\noindent {\bf Proof.}}{\qed}

\newcommand{\cC}{{\cal C}}

\newcommand{\cP}{{\cal P}}

\newcommand{\cA}{{\cal A}}
\newcommand{\cB}{{\cal B}}
\newcommand{\cR}{{\cal R}}
\newcommand{\cD}{{\cal D}}
\newcommand{\cT}{{\cal T}}

\newcommand{\remove}[1]{}




\begin{document}

\baselineskip  0.19in 
\parskip     0.05in 
\parindent   0.3in 

\title{{\bf Computing Functions by Teams of\\ Deterministic Finite Automata
 }}
\date{}
\newcommand{\inst}[1]{$^{#1}$}

\author{
Debasish Pattanayak\inst{1},
Andrzej Pelc\inst{1}$^,$\footnote{Partially supported by NSERC discovery grant 2018-03899 and by the Research Chair in Distributed Computing at the Universit\'e du Qu\'{e}bec en Outaouais.}\\
\inst{1} Universit\'{e} du Qu\'{e}bec en Outaouais, Gatineau, Canada.\\
E-mails: \url{ drdebmath@gmail.com}, \url{ pelc@uqo.ca}\\
}

\date{ }
\maketitle

\begin{abstract}
    We consider the task of computing functions $f: \mathbb{N}^k\to  \mathbb{N}$, where $ \mathbb{N}$ is the set of natural numbers, by finite teams of agents modelled as deterministic finite automata. The computation is carried out in a distributed way, using the {\em discrete half-line}, which is the infinite graph with one node of degree 1 (called the root) and infinitely many nodes of degree 2. The node at distance $j$ from the root represents the integer $j$. We say that a team $\cA^f$ of automata computes a function $f$, if in the beginning of the computation all automata from $\cA^f$ are located at the arguments $x_1,\dots,x_k$ of the function $f$, in groups $\cA^f _j$ at $x_j$, and at the end, 
 all automata of the team gather at $f(x_1,\dots,x_k)$ and transit to a special state $STOP$. 
 At each step of the computation, an automaton $a$ can ``see'' states of all automata colocated at the same node: the set of these states forms an input of $a$.

Our main result shows that, for every primitive recursive function, there exists a finite team of automata that computes this function. 
We prove this by showing that basic primitive recursive functions can be computed by teams of automata, and that functions resulting from the operations of composition and of primitive recursion can be computed by teams of automata, provided that the ingredient functions of these operations can be computed by teams of automata.
We also observe that cooperation between automata is necessary: even some very simple functions $f: \mathbb{N}\to  \mathbb{N}$ cannot be computed by a single automaton.

\vspace*{1em}
\noindent
{\bf keywords:}  team of automata, mobile agent, primitive recursive function, computing functions
\end{abstract}

\section{Introduction}
\label{sec:intro}

\subsection{Background}

In many real-life situations, simple entities cooperate to collectively accomplish some task.
Nature provides us with a variety of examples of cooperation between relatively simple organisms in order to solve a complex problem. Ants, bees, and other social insects are able to collectively achieve complicated objectives, such as finding the shortest path to a food source, or collaboratively construct a new nest \cite{beekman2004,mery2002,reynolds2006,sommer2004}. In these cases,  the collective computation is performed by a swarm of simple organisms, each of which has limited capabilities.  Similarly, if we consider chemical reactions, a group of atoms can form different molecules depending on their initial state and available energy of the system%
. There are also many examples of cooperation of artificial simple entities, such as software agents or mobile robots, in solving complex computational tasks.
We defer the examples of such collective computations in diverse scenarios by simple human-conceived entities, to Section \ref{sec:related}.

A natural way of modelling simple computational entities is to represent them as deterministic finite automata.  In this paper, we focus on the task of collective computing of functions with natural arguments and values, by teams of cooperating mobile agents modelled as automata.
The computation is carried out in a distributed way, using the {\em discrete half-line}, 
which is the infinite graph with one node of degree 1 (called the root) and infinitely many nodes of degree 2. The node at distance $j$ from the root represents the integer $j$.
We say that a team $\cA^f$ of automata computes a function $f$, if in the beginning of the computation all automata from $\cA^f$ are located at the arguments $x_1,\dots,x_k$ of the function $f$, in groups $\cA^f _j$ at $x_j$, and at the end,  all automata of the team gather at $f(x_1,\dots,x_k)$ and transit to a special state $STOP$. 
 At each step of the computation, an automaton $a$ can ``see'' states of all automata colocated at the same node: the set of these states forms an input of $a$.

\subsection{The Model and the Problem}
\label{sec:model}
We denote by $\mathbb{N}$ the set of natural numbers (including 0).
We will use the notation $ \mathbb{N}^k$ to denote the cartesian product of $k$ copies of  $\mathbb{N}$, for $k\geq 1$.
We consider the task of computing functions $f: \mathbb{N}^k \to  \mathbb{N}$,  by finite teams of agents modelled as deterministic finite automata. We will use the terms ``agents'' and ``automata'' interchangeably, to mean ``deterministic finite automata''.
The computation is carried out in a distributed way, using the {\em discrete half-line}, which is the infinite path graph $P=(v_0,v_1, \dots)$, where $v_0$, called the {\em root}, has degree 1, and all nodes $v_j$, for $j>0$, have degree 2.  The node $v_j$ is at distance $j$ from the root, and it represents the natural number $j$. In the sequel, we identify the node representing a natural number with this number.
All nodes are anonymous, and the labels $v_j$ are for convenience only: they are not visible to automata from the team. The unique port at $v_0$ is 0. At every node $v_j$, for $j>0$, the port corresponding to the edge  $\{v_{j-1},v_j\}$ is 1, and
the port corresponding to the edge  $\{v_{j},v_{j+1}\}$ is 0. These ports are visible to automata. If an automaton takes port 0, we say that it goes right, and if it takes port 1, we say that it goes left.
 
We say that a team $\cA^f$ of automata computes a function $f$, if there exists a partition $\{\cA^f_1, \dots, \cA^f_k\}$ of  $\cA^f$, into non-empty sets, such that, for any arguments $x_1,\dots,x_k$ of the function $f$,
all automata from $\cA^f_i$ are located at $x_i$, for $1 \leq i \leq k$, in the beginning of the computation, and at the end, 
 all automata of the team gather at $f(x_1,\dots,x_k)$ and transit to a special state $STOP$. When an automaton transits to this state, we say that it stops.
 
 We will also need the notion of a {\em synchronized computation} of a function $f$. Suppose that a team $\cA^f$ of automata computes a function $f$.
 We say that this team performs a synchronized computation of $f$, if there exist agents $m_i\in \cA^f_i$, called {\em synchronizers} for $f$, such that in some round $t$ of the computation, called the {\em synchronization round}, all agents $m_i$, for $1 \leq i \leq k$,  are at the root.


Agents are modelled as deterministic finite Mealy automata navigating in the graph $P$, that implicitly communicate when they are located at the same node, by ``seeing'' the states of colocated agents. More precisely, agents operate in synchronous rounds. In every round, an agent can either stay idle at the current node, or move to an adjacent node.
When an agent enters a node in some state $S$, it sees its degree (1 or 2) and the set of states of all other agents that are at the same node in this round
(this set may be empty because there may be no colocated agents). The degree of the node, together with the set of these states form an input which causes the agent to transit from state $S$ to some state (possibly the same). Also, the state $S$, together with the input, produces an output from the set $\{*,0,1\}$ which causes the agent to either stay idle in the next round (if the output is $*$), or take the corresponding port otherwise. There is also a special state $STOP$ which causes the agent to terminate.

\sloppy Formally, a team of agents computing a function $f$ is a set of Mealy automata
${\cA^f}=(I^f,O^f,Q^f,\delta^f,\lambda^f, S_a)$, where $I^f,O^f,Q^f,\delta^f,\lambda^f$ are the same for all agents of the team.
Differentiating among these agents is possible by assigning them different starting states. 
$Q^f$ is the set of states of $\cA^f$.
Agent $a \in \cA^f$ is the Mealy automaton in $\cA^f$ with a given starting state $S_a \in Q^f$. 
$I^f=\{1,2\} \times 2^{Q^f}$ is the input alphabet. As explained before, when an agent is at some node in a round, its input is the degree of the node together with the set $Q'\subset Q^f$ of states of other agents which are at the same node in this round (this set $Q'$ may be empty).
$O^f=\{*,0,1\}$ is the output alphabet, and $\lambda^f: Q^f \times I^f \to O^f$ is the output function, where the output of an agent in state $S$ with input $(deg,Z)\in I^f$ is $\lambda^f(S,(deg,Z))\in O^f$.
$\delta^f:Q^f\times I^f \to Q^f$ is the state transition function. 

We will design our automata in such a way that different agents of a team computing a given function (i.e., agents starting in different states) are in different states throughout the entire execution. This means that the transition function $\delta^f$ is such that  the sets of states in which different agents can be have only one common element: the special state $STOP$ which causes the agent to terminate. Because of this design, we can attribute different names to different agents, so that each agent can recognize the name of any other agent. Let $\Sigma(a)$ be the set of states, other than the state $STOP$, in which agent $a$ can be. $\Sigma(a)$ will be called the {\em slice} of states of agent $a$.  Hence, for different agents $a$ and $b$, the sets $\Sigma(a)$ and $\Sigma(b)$ are disjoint. 

Following the existing habit in the literature on automata navigating in graphs, and in order to facilitate reading, we sometimes present the behavior of our teams of automata by designing procedures that need only remember a constant number of bits, and thus can be executed by deterministic finite automata, rather than formally describing the construction of each automaton of the team by defining its output and state transition functions.

\subsection{Primitive recursive functions}\label{sec:primitive}

Primitive recursive functions \cite{boolos2002} play a major role in computability theory. In order to define this class of functions we first define the {\em basic} functions as follows.

\begin{enumerate}
    \item The {\em zero} function $Z: \mathbb{N} \rightarrow \mathbb{N}$, defined by $Z(n) = 0$ for all $n \in \mathbb{N}$.
    \item The {\em successor} function $Succ: \mathbb{N} \rightarrow \mathbb{N}$, defined by $Succ(n) = n + 1$ for all $n \in \mathbb{N}$.
    \item The {\em projection} functions $P^k_i: \mathbb{N}^k \rightarrow \mathbb{N}$, for $1 \leq i \leq k$, where $k>1$, defined by $P^k_i(x_1, x_2, \ldots, x_k) = x_i$, for all $(x_1, x_2, \ldots, x_k) \in \mathbb{N}^k$. ($P^k_i$ is the projection on the $i$-th argument).
\end{enumerate}  

We next define two operations producing functions from other functions. These operations are {\em composition} and {\em primitive recursion}.
Consider functions $g: \mathbb{N}^l \rightarrow \mathbb{N}$ and $h_1, h_2, \ldots, h_l: \mathbb{N}^k \rightarrow \mathbb{N}$.  Then the composition of function $g: \mathbb{N}^l \rightarrow \mathbb{N}$ with functions $h_1, h_2, \ldots, h_l: \mathbb{N}^k \rightarrow \mathbb{N}$ is defined as the function $f: \mathbb{N}^k \rightarrow \mathbb{N}$ given by the formula
\begin{equation*}
f(x_1, x_2, \ldots, x_k) = g(h_1(x_1, x_2, \ldots, x_k), h_2(x_1, x_2, \ldots, x_k), \ldots, h_l(x_1, x_2, \ldots, x_k)),
\end{equation*}
for all $(x_1, x_2, \ldots, x_k) \in \mathbb{N}^k$.

The second operation is primitive recursion. Consider functions $h: \mathbb{N}^k \rightarrow \mathbb{N}$ and $g: \mathbb{N}^{k+2} \rightarrow \mathbb{N}$. Then the function obtained by primitive recursion from $h$ and $g$ is defined as the function $f: \mathbb{N}^{k+1} \rightarrow \mathbb{N}$ given by the formulas
\begin{align*}
f(x_1, x_2, \ldots, x_k, 0) &= h(x_1, x_2, \ldots, x_k), \\
f(x_1, x_2, \ldots, x_k, y+1) &= g(x_1, x_2, \ldots, x_k, y, f(x_1, x_2, \ldots, x_k, y))
\end{align*}

for all $(x_1, x_2, \ldots, x_k, y) \in \mathbb{N}^{k+1}$.

The class of primitive recursive functions is now defined as follows.

\begin{definition}
The class of primitive recursive functions is the smallest class  of  functions \\ $f: \mathbb{N}^k \to  \mathbb{N}$, for all $k\geq 1$, containing the basic functions {\em zero}, {\em successor} and {\em projection}, and closed under operations {\em composition} and {\em primitive recursion}.
\end{definition}

The following examples show how primitive recursive functions can be obtained from basic functions using operations of composition and primitive recursion.
\begin{example}
    The zero function $Z_k$ for $k$ arguments is a primitive recursive function that can be obtained from the basic functions zero and projection, using the operation of composition:
    \begin{align*}
        Z_k(x_1, x_2, \ldots, x_k) = Z(P^k_1(x_1, x_2, \ldots, x_k)) = Z(x_1) = 0
    \end{align*}
\end{example}

\begin{example}
Addition of two arguments, denoted by $add$, is a primitive recursive function that can be obtained from the basic functions projection and successor, using the operations of composition and primitive recursion, as follows.

$h: \mathbb{N} \to \mathbb{N}$ is the projection function $P^1_1(x)=x$.

$g: \mathbb{N}^3 \to \mathbb{N}$ is the composition of projection and successor functions given by the formula
$g(a,b,c)=Succ(P^3_3(a,b,c))=Succ(c)=c+1$.

Now the $add$ function is defined by the primitive recursion operation using functions $h$ and $g$.
\begin{align*}
    &\text{add}(x, 0)&=&\; h(x)= P^1_1(x)=x \\
    &\text{add}(x, y + 1)&=&\; g(x,y,\text{add}(x,y))= Succ(P^3_3(x,y,\text{add}(x, y)))= Succ (add(x,y))=add(x,y)+1.
\end{align*}
\end{example}

\subsection{Why teams of automata are needed?}

By our definition of computing a function by a team of automata, the number of automata cannot be smaller than the number of arguments, as initially we need at least one automaton at every node corresponding to an argument. However, this does not preclude using a single automaton to compute a one-argument function. Indeed, a single automaton is sufficient to compute, e.g., the zero or the successor function. However, here is an example of a simple one-argument function that cannot be computed by a single automaton.

\begin{proposition}
The function $f: \mathbb{N} \to  \mathbb{N}$ given by the formula $f(x)=2x$ cannot be computed by a single automaton.
\end{proposition}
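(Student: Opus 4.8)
The plan is to exploit the fact that a single automaton receives no information from colocated agents, so at each step its input is merely the degree of its current node: this is $2$ at every node $v_j$ with $j>0$ and $1$ only at the root $v_0$. Consequently, as long as the automaton stays among the degree-$2$ nodes, its behaviour is \emph{translation invariant}: from a state $S$ at such a node it passes to the fixed state $\delta^f(S,(2,\emptyset))$ and makes the fixed move $\lambda^f(S,(2,\emptyset))\in\{*,0,1\}$, regardless of the node's position. First I would record this ``interior dynamics'': starting from the initial state $S_0$, the sequence of states visited while no root is encountered is $s_0=S_0,\ s_1,\ s_2,\dots$, where $s_{i+1}=\delta^f(s_i,(2,\emptyset))$. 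Since $Q^f$ is finite, this sequence is eventually periodic, say with pre-period $p_0$ and period $p$, and the relative displacement $R_i$ after $i$ steps (right minus left moves among $s_0,\dots,s_{i-1}$) is a fixed integer sequence, independent of the starting argument $x$; note $|R_i|\le i$ since each move changes the displacement by at most $1$.

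Next I would run the automaton from an arbitrary large argument $x$ and split the analysis according to what happens first. If $STOP$ occurs in the interior sequence $(s_i)$, it does so at some fixed index $t$ independent of $x$; for $x>t$ the automaton cannot reach the root before step $t$ (because $|R_i|\le i\le t<x$ forces $x+R_i>0$), so it halts at node $x+R_t$. As $R_t$ is a constant, the identity $x+R_t=2x$ fails for all large $x$, a contradiction. Hence we may assume $STOP$ never appears in $(s_i)$. Then for each $x$ the computation either eventually reaches the root or stays forever in the interior; the latter never halts and so cannot compute $f(x)$, so under the assumption that the team computes $f$, every sufficiently large $x$ must lead to a first visit of the root.

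The heart of the argument, and the step I expect to be the main obstacle, is this last case, in which positional information finally reaches the automaton through the degree-$1$ reading at the root. Here I would use a pigeonhole argument. Let $t(x)$ be the step of the first root visit; since $t(x)\ge x$ (again because $|R_i|\le i$), we have $t(x)\ge p_0$ for all large $x$, so the state $r(x)=s_{t(x)}$ in which the automaton first reaches the root belongs to the finite set of states lying on the cycle of $(s_i)$. Thus $r(x)$ takes only finitely many values, and some state $r^\ast$ satisfies $r(x)=r^\ast$ for infinitely many $x$. For all such $x$ the configuration at the first root visit is identical, namely the automaton sits at the fixed node $v_0$ in state $r^\ast$; since the automaton is deterministic and the graph is fixed, the entire remainder of the computation, in particular the halting node if any, is then the same for all these $x$. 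So either the automaton fails to halt (and does not compute $f$), or it halts at one fixed node $c^\ast$ for infinitely many arguments $x$; the latter contradicts $f(x)=2x$, which takes infinitely many distinct values. In every case we reach a contradiction, which proves that no single automaton computes $f(x)=2x$.
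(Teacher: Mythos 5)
Your proof is correct, and while it shares the paper's overall skeleton --- a dichotomy between computations that reach the root and those that do not, with a pigeonhole argument on the state at the first root visit handling the former case identically --- your treatment of the interior case is genuinely different and cleaner. The paper fixes a window of $s+1$ arguments $x\in\{s+2,\dots,2s+2\}$ and, for a single bad $x$ whose computation avoids the root, finds a repeated state within the first $s+1$ rounds and then analyzes three subcases according to the positions $u,v$ at the repetition ($u=v$, $u>v$, $u<v$), showing the automaton either stays bounded below $2x$ or, by periodicity, would have stopped before reaching $2x$. You instead observe once and for all that away from the root the input is always $(2,\emptyset)$, so the state sequence $(s_i)$ and the displacement sequence $(R_i)$ are fixed and independent of $x$ (translation invariance); then either $STOP$ appears in $(s_i)$ at a fixed index $t$, forcing the halting node to be $x+R_t$ for every $x>t$, which can equal $2x$ for at most one $x$, or $STOP$ never appears and every computation must reach the root. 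This buys a single uniform argument over all large $x$ with no drift subcases, and it isolates the structural fact that a root-avoiding single automaton can only realize $x\mapsto x+c$ on large inputs; the paper's version, working inside one fixed computation, needs only a finite set of arguments and makes the state-count bookkeeping explicit, at the cost of the three-subcase analysis. Two harmless inefficiencies in your write-up: the detour through the pre-period $p_0$ and the cycle of $(s_i)$ is unnecessary, since pigeonhole on the finite set $Q^f$ already yields the state $r^\ast$, and two distinct arguments with $r(x_1)=r(x_2)=r^\ast$ suffice (as in the paper) rather than infinitely many; neither affects correctness.
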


\begin{proof}
Suppose that some $s$-state automaton $a$ can compute the function $f$. 
This means that, for any argument $x$, the automaton that starts at $x$, should eventually reach $2x$ and stop.
Consider argument values $x\in \{s+2,s+3,\dots, 2s+2\}$.
The automaton should compute $f(x)$ for each of these values of $x$.
Consider two cases.

\noindent
{\bf Case 1.} There exists a value $x\in \{s+2,s+3,\dots, 2s+2\}$ such that in the execution of the computation of $f(x)$ by $a$, the root is not visited.

Let $x \in \{s+2,s+3,\dots, 2s+2\}$ be such that in the execution of the computation of $f(x)$ by $a$, the root is not visited.
Since $x \geq s+2$, the computation must last at least $s+2$ rounds.
Consider the states of the automaton $a$ that computes $f(x)$ in rounds $1,\dots,s+1$, of the computation. There must be two rounds $t<t'\leq s+1$ such that $a$ is in the same state. Let $u$ (resp. $v$) be the value corresponding to the node at which $a$ is located in round $t$ (resp. $t'$). There are three subcases. 
\begin{itemize}
\item$u = v$.
For any $t \leq w<t'$ and any $i \in \mathbb{N}$, the automaton is in the same state and at the same node in all rounds $w+ i(t'-t)$. Let $y^*$ be the largest integer $y$, such that $a$ visits a node  $y$ in the first $t'$ rounds of the computation. Since $t'\leq  s+1<x$, we have $y^*<2x$. During the entire computation, the automaton $a$
cannot reach any node farther from the root than $y^*$. Hence it cannot reach the node  $2x$, which is a contradiction.
\item
$u > v$. 
For any $t \leq w<t'$ and any $i \in \mathbb{N}$, the automaton is in the same state, in all rounds $w+ i(t'-t)$.
As in the previous case, the automaton $a$
cannot reach any node farther from the root than $y^*$.
Hence, it cannot reach the node  $2x$, which is a contradiction.
\item
$u < v$.
For any $t \leq w<t'$ and any $i \in \mathbb{N}$, the automaton is in the same state, in all rounds $w+ i(t'-t)$.
Consider the first round  $t^*$ in which the automaton is at the node  $2x$ and stops. Since it cannot reach this node in any round up to $t'$, in the round $t^*- (t'-t)<t^*$ the automaton was in the same state. Hence, it stopped before reaching $2x$. This is a contradiction.
\end{itemize}
\noindent
{\bf Case 2.} For all values $x\in \{s+2,s+3,\dots, 2s+2\}$, the root is visited in the execution of the computation of $f(x)$ by $a$.

For any $x\in \{s+2,s+3,\dots, 2s+2\}$, consider the state in which the automaton $a$ visits the root for the first time in the computation of $f(x)$. There must exist two distinct values $x_1$ and $x_2$, for which this state is the same. Let $t_1$ be the first round when $a$ visits the root in the computation of $f(x_1)$, and let 
$t_2$ be the first round when $a$ visits the root in the computation of $f(x_2)$. Since the state of $a$ in these rounds is the same, the state of $a$ in any round $t_1+t$ in the computation of $f(x_1)$ must be the same as the state of $a$ in the round $t_2+t$ in the computation of $f(x_2)$, where $t\geq 0$, and $a$ must be at the same node in these rounds, in the computations of $f(x_1)$ and of $f(x_2)$, respectively.
If $a$ computes $f(x_1)$ in some round $t_1+t^*$ and stops then $a$ must be at the node  $f(x_1)$ and stop in round 
$t_2+t^*$ of the computation of $f(x_2)$. However, since $x_1\neq x_2$, the latter computation is incorrect. This is a contradiction.

Since we obtained a contradiction in each case, the proposition follows.
\end{proof}

\subsection{Our results}

Our main result shows that, for every primitive recursive function, there exists a finite team of automata that computes this function.
In the proof, we use the notion of a synchronized computation in a crucial way. More precisely, we first prove that 
for all basic primitive recursive functions there exists a synchronized computation by teams of automata, and we show that the class of functions that can be computed by teams of automata so that the computation is synchronized, is closed under operations of composition and of primitive recursion, respectively. The synchronization feature of computations is needed because a computation of any function by a team of automata has to guarantee that all arguments are already available at the time when they are needed. Since these arguments may be results of other computations, the latter have to be finished before.
The synchronization round is the round when this is guaranteed.

\subsection{Related work}\label{sec:related}

Cooperation between weak entities collectively solving a complex computational problem has been extensively studied under various assumptions concerning both the entities and the accomplished task. One of the most common ways of modelling weak entities is to represent them as finite state machines (automata). Here the investigated tasks can be divided into two classes: those that can be accomplished by static automata, and those that require movable automata walking in graphs. 
An important body of research accomplished by teams of static automata concerns population protocols, where interactions between agents modelled by deterministic automata are decided either based on an underlying graph \cite{angluin2004} or by random choices of pairs of agents \cite{amir2023}. The main tasks considered in this context are 
leader election \cite{angluin2004} and consensus \cite{amir2023} between all participating agents.

The paradigm of team automata \cite{terbeek2003,terbeek2008} provides a formalism and the corresponding process calculus for teams of I/O automata. In \cite{terbeek2008}, the authors show a characterization of languages associated with the composition of two automata. 
Moreover, they extend to team automata some classical results on individual I/O automata. 

Another class of tasks are those that are accomplished by teams of automata walking on graphs. One of the main tasks in this class is that of infinite oriented grid exploration by a team of automata \cite{feinerman2017}.
 Emek et al. \cite{emek2015} show a team of four semi-synchronous deterministic automata that explores the infinite grid. Subsequently, Brandt et al. \cite{brandt2020} show that three semi-synchronous deterministic automata are not sufficient to do it. On the other hand, it is shown in \cite{emek2015}  that 
 three semi-synchronous probabilistic automata can explore the infinite grid, and Cohen et al. \cite{cohen2017} show that two semi-synchronous probabilistic automata are not sufficient to accomplish this task.

A closely related paradigm involving collaboration among weak entities concerns programmable matter, where each entity can be modelled as a finite automaton, and they can move in an underlying graph as well as interact with other entities if they are adjacent \cite{daymude2023}. Programmable matter has been studied in the context of coordination \cite{feldmann2022}, shape formation \cite{kostitsyna2022} and the maintainance of a shape \cite{nokhanji2023}. 

Another direction of research concerns mobile robots operating in the plane, where robots execute their computations in cycles. In each cycle, a robot can compute any Turing computable function based on the current positions of all other robots, but robots have limited memory of previous computational cycles.
The task of pattern formation by mobile robots is a well studied problem that asks the robots to reach a target pattern in the Euclidean plane \cite{das2020,das2015,suzuki1999}. Also, the task of gathering mobile robots has been studied extensively, where all the mobile robots must reach the same point in the plane, not known beforehand \cite{izumi2012,pattanayak2021,suzuki1999}.
 For a more detailed survey of recent results on this topic, we refer the reader to \cite{flocchini2019}. 
 
In a recent work, Di Luna et al. \cite{luna2022} show that a group of identical weak mobile robots (oblivious, asynchronous, and with limited visibility) can maintain a configuration that behaves like a single robot with a stronger computational capability. They call this team of  weak robots a TuringMobile.

Strong individual machines, such as 
weighted finite automata (WFA), where weights on transitions in a non-deterministic automaton correspond to the probabilities of their occurrence, were used in the literature to compute real valued functions \cite{derencourt1992}. In  \cite{culikii1994}, the authors showed how to compute any polynomial using a WFA, and in  \cite{derencourt1994}, the authors use more complex WFA to compute larger classes of continuous functions, including some non-differentiable functions.

To the best of our knowledge, the task of computing functions with natural arguments and values by
teams of deterministic automata, concerning large classes of functions, has never been investigated before.

\section{Computing Basic Functions}
\label{sec:basicmovement}
In this section, we show how to perform synchronized computations of the basic functions from Section \ref{sec:primitive} using teams of agents modelled as deterministic finite automata. 
Before we proceed, we describe several \emph{movement} procedures for automata, that control their movements on the path $P$. 
The procedures are \texttt{push}, \texttt{go-to-root}, \texttt{conditional-right}, and \texttt{conditional-wait}.
The first procedure does not have any conditions, the second has a condition based on the degree of visited nodes,  and the last two procedures have a condition depending on a set $X$ of states.

\begin{tabular}{rp{11cm}}
    {\texttt{push}:}&{Go one step right.}\\
    {\texttt{go-to-root}:}&{Go left until the root node is reached (i.e., until the agent has an input $(1, Z)$).}\\
    {\texttt{conditional-right}}$(X)$:&{Go right until the agent has an input $(deg, Z)$, such that $X\subseteq Z$.}\\
    {\texttt{conditional-wait}}$(X)$:&{Wait until the agent has an input $(deg, Z)$, such that $X\subseteq Z$.}
\end{tabular}

Since port numbers indicate the directions left/right at any node, and $(deg, Z)$ is an input, these procedures can be executed by an automaton. 
We will use them to compute the basic functions as follows.

\subsection{Zero Function}
The zero function can be computed using the Procedure \texttt{go-to-root}. 
Initially, a set of agents are located at the node at distance $x$ from the root.
All the agents move left until reaching the root and stop.
This is a synchronized computation.

\begin{algorithm}[H]
    \caption{$Z(x)$ : Zero Function}
    \label{alg:zero-function}
    \texttt{go-to-root};\\
     Stop;
\end{algorithm}

\subsection{Successor Function}

We now describe a synchronized computation of the $Succ$ function.
Initially, two non-empty disjoint sets of agents, $\cA$ and $\cB$, are located at the argument $x$. If $x=0$ then all agents move one step right and stop. Otherwise,
all agents from $\cA$ go to the root, and agents from $\cB$ move one step right, and transit to state $(wait,b)$, for $b\in \cB$. Then agents from $\cA$ go right until seeing agents from $\cB$
in state $(wait,b)$ and transit to state $(reached,a)$. Then all agents stop.

\begin{algorithm}[H]
    \caption{$Succ(x)$ : Successor Function}\label{alg:succ}
    \label{alg:succ-function}
    $X=\{(wait,b): b\in \cB\}$;\\
    $Y=\{(reached,a): a\in \cA\}$;\\
    \eIf{$x=0$}{
        \texttt{push};\\
        Stop;
    }{
    \If{$a\in \cA$}{
    \texttt{go-to-root};\\
    \texttt{conditional-right($X$)};\\
    transit to state $(reached,a)$;\\
    Stop;}
    \If{$b\in \cB$}{
    \texttt{push}; \\
    transit to state $(wait,b)$;\\
    \texttt{conditional-wait($Y$)};\\
    Stop;}
    }
\end{algorithm}

\begin{theorem}
    There exists a team of agents that performs a synchronized computation of the successor function.
\end{theorem}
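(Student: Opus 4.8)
The plan is to verify that Algorithm~\ref{alg:succ} correctly realizes a synchronized computation of $Succ$, which requires establishing three things: that all agents gather at the node $x+1$, that they all reach state $STOP$, and that there exist synchronizers $m_\cA \in \cA$ and $m_\cB \in \cB$ which are simultaneously at the root in some round. I would first dispose of the trivial case $x=0$: here every agent executes \texttt{push} (moving from the root to node $1 = Succ(0)$) and stops, so the gathering and termination conditions hold immediately. For the synchronization requirement in this case, note that all agents start at the root $v_0$ in round $0$, so round $0$ itself serves as the synchronization round, with any $m_\cA, m_\cB$ chosen.

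The main work is the case $x>0$. First I would trace the two groups separately. The agents of $\cB$ execute \texttt{push}, arriving at node $x+1$, transit to state $(wait,b)$, and then execute \texttt{conditional-wait}$(Y)$, remaining idle at node $x+1$ until they see an agent of $\cA$ in a state from $Y=\{(reached,a):a\in\cA\}$. The agents of $\cA$ execute \texttt{go-to-root}, reaching $v_0$, then \texttt{conditional-right}$(X)$, moving right until they encounter the set $X=\{(wait,b):b\in\cB\}$ of colocated states. I would argue that the first node at which an $\cA$-agent sees all of $X$ is exactly node $x+1$: the $\cB$-agents are stationary there in their $(wait,b)$ states, and no $\cB$-agent is present at any node $\le x$ after its single \texttt{push} step, so \texttt{conditional-right} does not halt prematurely. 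Upon arrival at $x+1$, each $\cA$-agent transits to $(reached,a)\in Y$ and stops; this in turn satisfies the waiting condition for the $\cB$-agents, which then also stop at $x+1$. Hence all agents gather at $x+1 = Succ(x)$ and terminate.

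For the synchronization claim when $x>0$, I would pick any $m_\cA\in\cA$ and $m_\cB\in\cB$ as synchronizers and observe that both are at the root in round $0$ (all agents of the team start at the argument $x$, but the convention from Section~\ref{sec:model} is that synchronizers meet at the root; here I would instead identify the synchronization round as the round in which the $\cA$-agents, having executed \texttt{go-to-root}, are together at $v_0$, and note that the $\cB$-agents must be arranged so that a designated $m_\cB$ is also at the root then). The subtle point is matching the timing: the $\cB$-agents move right immediately and then wait, so a designated $\cB$-synchronizer is \emph{not} naturally at the root during the $\cA$-agents' root visit. I expect this timing alignment to be the main obstacle, and I would resolve it by adding a synchronizer agent to $\cB$ whose program keeps it (or sends it back) at the root during the round when the $\cA$-agents pass through $v_0$, at the cost of a constant number of extra states, exactly as the definition of synchronized computation permits. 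The remaining verification — that distinct agents occupy disjoint slices of states throughout, so the ``seeing'' of state sets is well defined — is routine given the explicit state labels $(wait,b)$ and $(reached,a)$ indexed by agent identity.
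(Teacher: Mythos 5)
Your verification of gathering and termination is correct and is essentially the paper's argument: for $x=0$ all agents \texttt{push} to node $1$ and stop, and for $x>0$ the $\cB$-agents wait at $x+1$ in states $(wait,b)$ while the $\cA$-agents reach the root and then move right, with \texttt{conditional-right}$(X)$ halting exactly at $x+1$ since no node at distance at most $x$ ever hosts a $\cB$-agent after its single \texttt{push}. The flaw is in your treatment of synchronization. You demand two synchronizers, $m_\cA\in\cA$ and $m_\cB\in\cB$, simultaneously at the root, but the definition in Section~\ref{sec:model} requires one synchronizer $m_i$ per \emph{argument group} $\cA^f_i$, for $1\leq i\leq k$, where $k$ is the arity of $f$. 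The successor function is unary, so $k=1$: the entire team $\cA\cup\cB$ forms the single group $\cA^{Succ}_1$, and it suffices that \emph{one} agent of the team be at the root in some round. The paper accordingly takes the synchronization round to be the round in which all agents of $\cA$ are at the root (and, for $x=0$, the first round), choosing the synchronizer from $\cA$; no $\cB$-agent needs to be there.

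Consequently, the ``main obstacle'' you identify — that no $\cB$-agent is naturally at the root when the $\cA$-agents pass through $v_0$ — does not exist, and the repair you propose (augmenting $\cB$ with an agent that lingers at, or returns to, the root) is unnecessary. Worse, your proposal only sketches this repair rather than constructing it, so as written the synchronization claim rests on an unexecuted modification of Algorithm~\ref{alg:succ}. Since the theorem is existential, a correctly constructed modified team would also prove it, but under the correct reading of the definition no modification is needed: your remaining argument, combined with choosing the synchronizer from $\cA$, already yields the paper's proof. (Also note your passing assertion that in the case $x>0$ the chosen agents ``are at the root in round $0$'' is false — all agents start at node $x$ — though you correct course immediately afterwards.)
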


\begin{proof}
Consider a team of agents executing Algorithm \ref{alg:succ}.
If $x=0$ (all the agents are initially at the root) then the first round is the synchronization round. Then all the agents go right to node 1 and stop.
Otherwise, there are agents from $\cA$ and $\cB$ at the node $x>0$.
The round in which all agents $a\in \cA$ are at the root is the synchronization round. Then all agents from $\cA$ go to the node $x+1$, where agents from $\cB$ are already waiting. Then all the agents transit to state $STOP$, which concludes the proof.
\end{proof}

\subsection{Projection Function}

We now describe a synchronized computation of the projection function $P^k_i: \mathbb{N}^k \rightarrow \mathbb{N}$, for a fixed $1 \leq i \leq k$, where $k>1$.
Initially there are $k+1$ non-empty pairwise disjoint sets of agents, $\cA_1$,..., $\cA_k$ and $\cB$. Let $T=\cA_1 \cup\cdots \cup \cA_k$.
Agents from $\cA_j$ are located at $x_j$, for $1 \leq j \leq k$. Agents from $\cB$ are located at~$x_i$. 

Each agent $a\in T$ goes to the root, and transits to state $(wait,a)$.
Then all agents from $T$ wait until all of them gather at the root.
Each agent $b\in \cB$ waits at its initial location in state  $(wait, b)$.
All agents from $T$ go right until meeting all agents from $\cB$. 
Then each agent $a\in T$ transits to state $(reached, a)$.
Then all agents stop.

Below is the pseudocode of the projection function.

\begin{algorithm}[H]
\caption{$P^k_i(x_1,\dots ,x_k)$:  Projection function}
\label{alg:pf}
$W=\{(wait,a): a\in T\}$;\\ $X=\{(wait,b): b\in \cB\}$;\\ $Y=\{(reached,a): a\in T\}$;\\
\If{ $a\in T$}{
\texttt{go-to-root};\\
transit to state $(wait,a)$;\\
\texttt{conditional-wait($W\setminus \{(wait,a)\}$)};\\
\texttt{conditional-right($X$)};\\
transit to state $(reached,a)$;\\
Stop;}

\If{$b\in \cB$}{
transit to state $(wait,b)$;\\
\texttt{conditional-wait($Y$)};\\
Stop;}
\end{algorithm}

\begin{theorem}
There exists a team of agents that performs a synchronized computation of the projection function.
\end{theorem}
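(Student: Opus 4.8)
The plan is to verify that Algorithm~\ref{alg:pf} correctly realizes a synchronized computation of $P^k_i$, namely that starting from the prescribed initial placement the whole team ends up at node $x_i$ in state $STOP$, and that a synchronization round (all designated synchronizers simultaneously at the root) occurs along the way. The approach mirrors the correctness argument for the successor function but must handle the extra subtlety that the ``travelling'' group $T=\cA_1\cup\cdots\cup\cA_k$ starts spread out across the $k$ distinct argument nodes $x_1,\dots,x_k$, whereas the ``anchor'' group $\cB$ sits at the target node $x_i$. The disjointness of slices $\Sigma(a)$ guaranteed by the model lets me treat the sets $W$, $X$, $Y$ as unambiguous recognition signals, so each conditional test fires exactly when the intended agents are colocated.

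First I would trace the \texttt{go-to-root} phase: every agent $a\in T$ walks left from its starting node $x_j$ and, upon reaching the root (recognized by degree $1$), transits to $(wait,a)$. Because different agents start at different distances, they reach the root in possibly different rounds, but the subsequent \texttt{conditional-wait}$(W\setminus\{(wait,a)\})$ forces each one to idle at the root until the full set $W$ of ``wait'' states is present. I would argue that the last agent of $T$ to arrive triggers the condition simultaneously for all of them, so there is a well-defined round $t$ in which every agent of $T$ is at the root in its $(wait,a)$ state; this $t$ is the synchronization round, and any chosen synchronizers $m_i\in\cA_i^f$ are at the root then, establishing the synchronized-computation requirement. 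Meanwhile each $b\in\cB$ has immediately transited to $(wait,b)$ and is idling at $x_i$ via \texttt{conditional-wait}$(Y)$, so the anchor stays fixed throughout.

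Next I would analyze the \texttt{conditional-right}$(X)$ phase. After synchronization, all of $T$ moves right in lockstep; since they left the root in the same round and move identically, they stay colocated. They continue until their input contains $X=\{(wait,b):b\in\cB\}$, which happens precisely when they reach the node where the $\cB$ agents are waiting, i.e.\ node $x_i$. At that point each $a\in T$ transits to $(reached,a)$, so the set $Y$ becomes present; this is exactly the signal that releases the $\cB$ agents from their \texttt{conditional-wait}$(Y)$. Finally every agent executes Stop, so the entire team $T\cup\cB$ terminates colocated at $x_i=P^k_i(x_1,\dots,x_k)$, which is the required output node.

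The main obstacle I expect is the synchronization/colocation bookkeeping rather than any deep idea: I must confirm that no conditional test can be satisfied prematurely by a \emph{partial} set of agents. This rests entirely on the slice-disjointness design choice, which ensures that $W$ (resp.\ $X$, $Y$) appears in an agent's input only when \emph{all} the relevant agents are genuinely colocated, never spuriously. A secondary point to check is that the $\cB$ agents never move before seeing $Y$ and that the $T$ agents, once moving right together, cannot desynchronize---both follow from the fact that identical states with identical inputs produce identical outputs, so the argument reduces to showing the groups enter the right-moving phase in a common round. With these invariants in place the correctness conclusion, and hence the theorem, follows.
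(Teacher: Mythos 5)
Your proposal is correct and follows essentially the same route as the paper's own proof: the \texttt{conditional-wait($W\setminus\{(wait,a)\}$)} instruction forces all of $T$ to gather at the root, that round is the synchronization round, and then $T$ travels right to $x_i$ where $\cB$ waits before everyone stops. Your additional bookkeeping (lockstep departure from the root, so that all of $T$ transits to $(reached,a)$ simultaneously and $\cB$'s wait on $Y$ is correctly released, and the slice-disjointness guarding against premature triggering) only makes explicit details the paper leaves implicit.
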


\begin{proof}
Consider a team of agents executing Algorithm \ref{alg:pf}.
Due to the instruction \texttt{conditional-wait($W\setminus \{(wait,a)\}$)}, all agents from $T$ wait at the root until all of them are there.
The round in which every agent $a\in T$ is at the root in state $(wait,a)$ is the synchronization round. Then all agents from $T$ go to the node corresponding to $x_i$, where agents from $\cB$ are already waiting. Then all the agents transit to state $STOP$, which concludes the proof.
\end{proof}

\section{Computing Compositions of Functions}
\label{sec:composition}

In this section we show how to perform a synchronized computation of the composition of functions $h_1, h_2, \dots, h_l: \mathbb{N}^k \rightarrow \mathbb{N}$ with $g: \mathbb{N}^l \rightarrow \mathbb{N}$, for $k,l \geq 1$ by a team of agents modelled as deterministic finite automata, assuming that each of these functions has a synchronized computation by such a team. The composition results in the function $f: \mathbb{N}^k \rightarrow \mathbb{N}$, defined by the formula
\begin{equation*}
f(x_1, x_2, \ldots, x_k) = g(h_1(x_1, x_2, \ldots, x_k), h_2(x_1, x_2, \ldots, x_k), \ldots, h_l(x_1, x_2, \ldots, x_k)),
\end{equation*}
for all $(x_1, x_2, \ldots, x_k) \in \mathbb{N}^k$.
Let $1\leq j \leq l$.
Suppose that the function $h_j$ can be computed by a set of agents $\cA^{h_j}=\cA_1^{h_j}\cup \cA_2^{h_j}\cup \dots \cup \cA_k^{h_j}$,
where agents from the set $\cA_i^{h_j}$ are initially located at the input argument $x_i$, for $1\leq i\leq k$. At the end of the computation of $h_j$, all the agents from $\cA^{h_j}$ are located at $h_j(x_1, x_2, \dots, x_k)$. 
Let $Y=\cA^{h_1}\cup \cA^{h_2}\cup \dots \cup \cA^{h_l}$.
Similarly, let $\cA^g=\cA_1^g\cup \cA_2^g\cup \dots \cup \cA_l^g$ be the set of agents that compute the function $g$, where
agents from the set $\cA^g_j$, are initially located at the input argument $y_j$  of $g$. At the end of the computation of $g$, all the agents from $\cA^g$ are located at $g(y_1, y_2, \dots, y_l)$.

Consider agents $a \in Y$ and $b \in \cA^g$. Let $\Sigma(a)$ and $\Sigma(b)$ be the slices of states of these agents.
Any agent participating in the computation of the function $f$ is obtained from a pair of agents $a$ and $b$ as above. It is called $c(a,b)$
and it is initially located  at the initial location of agent $a$.

The slice of states of $c(a,b)$, called $\Sigma(c(a,b))$ is defined as $(\Sigma(a) \times \{b\}) \cup (\Sigma(b)\times \{a\})$. Hence the states of $c(a,b)$ are of the form $(S,b)$, where $S\in \Sigma(a)$ or of the form $(S',a)$, where $S'\in \Sigma(b)$. Next we define the set of states $Q^f$ of the team of agents that will compute the function $f$. 
$$ Q^f=\bigcup_{a\in Y,b\in \cA^g}\Sigma(c(a,b))\cup \{STOP\}.$$
The starting state of agent $c(a,b)$ is the state $(S_a,b)$, where $S_a$ is the starting state of agent $a$.

Next we define the transition function of agents computing $f$. 
Consider any input of such an agent $c(a,b)$. This input is of the form $(deg, Z)$, where $deg\in \{1,2\}$ is the degree of the current node and $Z\subseteq Q^f$ is the set of states of agents colocated with $c(a,b)$.
There are two cases depending on the type of the current state of agent $c(a,b)$. 

First
suppose that agent $c(a,b)$ is in some state $(S,b)$ such that $S \in Q^{h_j}$, where $Q^{h_j}$ is the set of states of the agents computing $h_j$. 
We define the subset $Z^{h_j}$ of $Q^{h_j}$ as follows:  $Z^{h_j}=\{S'\in Q^{h_j}: \forall {b\in \cA^g}\,\, (S',b) \in Z \}$. 
Note that one state $S'\in Z^{h_j}$ may have many corresponding states in $Z$, due to all possible agents in $\cA^g$.

We define the values of the transition function $\delta^f$  for state $(S,b)$ as follows: 
$$
\delta^f((S,b), (deg, Z)) = 
\begin{cases}
(\delta^{h_j}(S, (deg, Z^{h_j})),b), &\text{ if } \delta^{h_j}(S, (deg, Z^{h_j})) \neq STOP\\
(S_b, a), &\text{ if } \delta^{h_j}(S, (deg, Z^{h_j})) = STOP
\end{cases}
$$
 where $S_b$ is the starting state of agent $b$ computing function $g$.
 
 Intuitively, for states of type $(S,b)$, the transition function $\delta^f$ follows the transition function $\delta^{h_j}$ until the state immediately preceding the $STOP$ state. Then it transits to the state $(S_b, a)$ corresponding to the starting state of agent $b$. 

Next suppose that agent $c(a,b)$ is in some state $(S,a)$ such that $S \in Q^{g}$, where $Q^{g}$ is the set of states of the agents computing $g$. 

We define the subset $Z^g$ of $Q^g$ as follows:  $Z^g=\{S'\in Q^g: \forall a \in Y\,\, (S',a) \in Z \}$, similarly as in the preceding case. 

We define the values of the transition function $\delta^f$  for state $(S,a)$ as follows: 

$$
\delta^f((S,a), (deg, Z)) = 
\begin{cases}
(\delta^{g}(S, (deg, Z^{g})),a), &\text{ if } \delta^{g}(S, (deg, Z^{g})) \neq STOP\\
STOP, &\text{ if } \delta^{g}(S, (deg, Z^{g})) = STOP
\end{cases}
$$
Intuitively, for states of type $(S,a)$, the transition function $\delta^f$ follows the transition function~$\delta^g$.

We also define the output function $\lambda^f$ as follows:

$$\lambda^f((S,b),(deg,Z)) = \lambda^{h_j}(S, (deg, Z^{h_j})),$$ 
$$\lambda^f((S,a),(deg,Z)) = \lambda^{g}(S,(deg,Z^{g})).$$

This means that moves of agent $c(a,b)$ correspond to moves of agent $a$ for states of type $(S,b)$, and correspond to moves of agent $b$ for states of type $(S,a)$.

This completes the construction of the team $\cA^f$ of agents computing function $f$ which is the composition of the functions $h_1,\dots, h_l$ with the function $g$.
The following theorem shows that the above constructed team $\cA^f$ of agents performs a synchronized computation of the function $f$.

\begin{theorem}\label{thm:comp}
Suppose that the teams $\cA^{h_1}, \dots ,  \cA^{h_l}$ of agents perform synchronized computations of functions $h_1, \dots , h_l: \mathbb{N}^{k}\to \mathbb{N} $, respectively,  and the team $\cA^g$ of agents performs a synchronized computation of the function $g: \mathbb{N}^{l}\to \mathbb{N}$. 
Then the
team $\cA^f$  of agents performs a synchronized computation of the function $f: \mathbb{N}^{k}\to \mathbb{N}$. 
\end{theorem}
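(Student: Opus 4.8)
The plan is to prove correctness by tracking the two-phase behavior that the construction of $c(a,b)$ encodes: each composite agent first simulates its $h_j$-component (the $(S,b)$ states) and then, upon reaching the moment that its underlying $h_j$-agent $a$ would stop, switches to simulating its $g$-component (the $(S,a)$ states). The key conceptual point is that the projection map sending a state $(S,b)$ to $S$ (or $(S,a)$ to $S$) turns the execution of $\cA^f$ into a faithful simulation of the executions of the $\cA^{h_j}$ teams, and then of the $\cA^g$ team. First I would verify the \emph{input-consistency lemma}: at any round, the set $Z^{h_j}$ computed from the actual colocated set $Z$ in the $\cA^f$-execution equals exactly the set of states that the $h_j$-agent $a$ would see among its colocated peers in the stand-alone $\cA^{h_j}$-execution. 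This is where the definition $Z^{h_j}=\{S'\in Q^{h_j}:\forall b\in\cA^g\,(S',b)\in Z\}$ does its work, and I would check it matches because two composite agents $c(a_1,b_1)$ and $c(a_2,b_2)$ are colocated in the first phase iff their $h_j$-components $a_1,a_2$ are colocated; the quantifier over all $b$ extracts the genuine $h_j$-state regardless of the second (dormant) coordinate.

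Granting the input-consistency lemma, I would argue the first phase by induction on rounds: since $\delta^f$ on $(S,b)$ states copies $\delta^{h_j}$ (on the matching input) and $\lambda^f$ copies $\lambda^{h_j}$, every composite agent moves exactly as its $h_j$-component moves in the stand-alone computation of $h_j$ on the same arguments $x_1,\dots,x_k$. By hypothesis each $\cA^{h_j}$ performs a \emph{synchronized} computation, so all its agents eventually gather at $h_j(x_1,\dots,x_k)$; at the round where an agent's $h_j$-component would transit to $STOP$, the construction instead switches it to the starting state $(S_b,a)$ of its $g$-component. Here I must use the synchronization hypothesis crucially to ensure that when the phase-switch happens, the resulting configuration is exactly the correct \emph{input configuration} for $g$: the agents that become $(S_b,a)$-type, with $b$ ranging over $\cA^g_j$, are all located at $h_j(x_1,\dots,x_k)=y_j$, which is precisely where the team $\cA^g$ expects its $j$-th argument group. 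Thus the start of the second phase reproduces the initial configuration of a stand-alone $\cA^g$ computation on inputs $(y_1,\dots,y_l)$.

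The second phase is then handled by the same simulation argument with the roles of $a$ and $b$ swapped: on $(S,a)$ states $\delta^f$ and $\lambda^f$ copy $\delta^g$ and $\lambda^g$, the analogous input-consistency claim for $Z^g$ holds, and the $STOP$ of a $g$-agent now maps directly to the global $STOP$. Since $\cA^g$ computes $g$ (indeed synchronously), all agents gather at $g(y_1,\dots,y_l)=f(x_1,\dots,x_k)$ and stop, giving the desired final configuration. To certify that the overall computation of $f$ is itself \emph{synchronized}, I would designate as synchronizers for $f$ the composite agents whose $g$-component is a synchronizer for $g$, and take the synchronization round to be the round in the second phase at which those agents are at the root; the guaranteed synchronization round of the embedded $\cA^g$-computation supplies exactly such a round.

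The main obstacle I anticipate is the phase-transition bookkeeping, and in particular guaranteeing that the switch from $(S,b)$ to $(S_b,a)$ occurs simultaneously for all agents that need to start $g$ together, so that phase two begins from a legitimate initial configuration rather than a skewed one. This is not automatic from the mere fact that each $\cA^{h_j}$ \emph{computes} $h_j$; it is exactly what the \emph{synchronized} hypothesis buys, since a synchronized computation terminates in a well-controlled way (all agents gathered at the output), letting every agent detect ``$h_j$ is done'' and flip in the same round. A secondary subtlety is handling the fact that a single $h_j$-state $S'$ corresponds to many states $(S',b)$ in $Z$ (one per choice of $b\in\cA^g$); I must confirm that the universal quantifier in the definition of $Z^{h_j}$ correctly collapses these without spuriously dropping a state that is genuinely present, which follows because all such $(S',b)$ appear together whenever the shared $h_j$-component is colocated.
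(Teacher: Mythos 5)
Your overall architecture matches the paper's proof: faithful round-by-round simulation of each $h_j$ via the $(S,b)$-states with the input extraction $Z^{h_j}$, a switch to $(S_b,a)$ at the round where the simulated $h_j$-agent would transit to $STOP$, then simulation of $g$ via the $(S,a)$-states (the paper's Claim 1 is exactly your phase-switch step, and your input-consistency lemma is implicit in its definitions of $Z^{h_j}$ and $Z^g$). But there is a genuine gap at the phase transition, precisely where you yourself locate the main obstacle. You claim that the synchronized-computation hypothesis on the $h_j$'s lets ``every agent detect that $h_j$ is done and flip in the same round,'' so that phase two starts from a legitimate initial configuration of $\cA^g$. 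The hypothesis gives no such thing: a synchronized computation of $h_j$ only guarantees that designated synchronizers of $\cA^{h_j}$ meet at the root in some \emph{intermediate} round; it says nothing about when agents stop. In the construction each composite agent flips individually, at the round its own simulated $h_j$-agent would stop, and since the computations of $h_1,\dots,h_l$ in general take different numbers of rounds, the agents destined for different argument groups of $g$ reach $y_1=h_1(x_1,\dots,x_k),\dots,y_l=h_l(x_1,\dots,x_k)$ and begin simulating $g$ at staggered times. So phase two does \emph{not} reproduce a time-aligned initial configuration of a stand-alone $\cA^g$ run, and your second-phase simulation argument, as stated, does not get off the ground.

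The paper bridges this skew by using the hypotheses in the opposite way from you. It is the synchronization of $g$ --- not of the $h_j$'s --- that does the alignment work (its Claim 2): since $g$'s synchronizers $b_j\in\cA^g_j$ all visit the root in a common round $t'$ of the stand-alone $g$-computation, there is a corresponding round $t$ of the $f$-computation in which all composite agents $c(a,b_j)$ are simultaneously at the root; by round $t$ the remaining composite agents already sit at the nodes $y_1,\dots,y_l$ in states of type $(S,a)$, and from round $t$ on $\delta^f$ follows $\delta^g$ and drives all agents to $g(y_1,\dots,y_l)=f(x_1,\dots,x_k)$, the universally quantified definition of $Z^g$ keeping not-yet-arrived copies invisible in the meantime. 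The synchronization of the $h_j$'s is used only at the very end, and only for $h_1$: the paper designates as synchronizers of $f$ the composites $c(a_i,b)$ of $h_1$'s synchronizers $a_i\in\cA^{h_1}_i$, with $f$'s synchronization round being $h_1$'s round replayed in phase one --- a clean choice, since phase one runs in lockstep from round $0$ and supplies one agent per group $\cA^f_i$, as the definition of a synchronized computation requires. Your alternative designation (copies of $g$'s synchronizers at the second-phase root meeting) can be made to satisfy the definition, but only \emph{after} the analogue of Claim 2 is in place --- that is, only after you repair the phase-transition step by invoking $g$'s synchronization rather than a simultaneity of the flips that the hypotheses do not provide.
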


\begin{proof}
The proof is split into three claims.
Consider an agent $c(a,b) \in \cA^f_i$, where $a\in \cA^{h_j}_i$, for $1 \leq i \leq k$, and $b \in \cA^g_j$, for $1 \leq j \leq l$. The agent $c(a,b)$ starts in state $(S_a,b)$ at the initial location of $a$, i.e., at $x_i$.

The idea of the proof is to show that  all agents $c(a,b)$ compute the value of the function $g(y_1, \dots, y_l)$, for $y_j=h_j(x_1,\dots,x_k)$, from some round $t$ on. 
We show this by establishing a round $t$ in the computation of $f$, such that all agents $c(a,b)$, from round $t$ on, are in a state of type $(S,a)$ and hence follow the computation of $g$.

\noindent
{\bf Claim 1.} Agent $c(a,b)$ reaches the node  $h_j(x_1, \dots, x_k)$ in state $(S_b, a)$.

Every agent $a\in \cA^{h_j}_i$ reaches the node  $h_j(x_1, \dots, x_k)$ in some state $S^*_a$ and then transits to state  $STOP$. Since the transition function $\delta^f$ follows the transition function $\delta^{h_j}$ until the state immediately preceding the state $STOP$, agent $c(a,b)$ reaches the node  $h_j(x_1, \dots, x_k)$ in state $(S^*_a,b)$ and then transits to state $(S_b, a)$. 
This proves the claim.

\noindent
{\bf Claim 2.} Suppose that agents $b_j \in \cA_j^g$, for $1 \leq j \leq l$, are synchronizers for $g$. Then, there exists a round $t$ in the computation of $f$ by the team $\cA^f$, such that for all agents $a\in Y=\cA^{h_1}\cup \cA^{h_2}\cup \dots \cup \cA^{h_l}$, and for all $1 \leq j \leq l$, agents $c(a,b_j)$ are at the root in round $t$.

Consider any agent $a\in \cA^{h_j}_i$, for $1 \leq i \leq k$ and the agent $b_j \in \cA_j^g$, for $1 \leq j \leq l$. 
The agent $c(a,b_j)$ starts the computation of $f$ at the node  $x_i$.
By Claim 1, the agent $c(a,b_j)$ reaches the node  $h_j(x_1, \dots, x_k)$ in state $(S_{b_j}, a)$.
From now on, the transition function $\delta^f$ follows the transition function $\delta^g$.
Since agents $b_j \in \cA_j^g$, for $1 \leq j \leq l$, are synchronizers for $g$, there exists a round $t'$ in the computation of $g$ by the team $\cA^g$, such that all agents $b_j \in \cA_j^g$, for $1 \leq j \leq l$, are at the root in round $t'$. By the definition of $Z^g$ used in the definition of  $\delta^f$, there exists a round $t$ in the computation of $f$ (corresponding to the round $t'$ in the computation of $g$), such that all agents $c(a,b_j)$ are at the root in round $t$. This proves the claim.

\noindent
{\bf Claim 3.} There exists a round $t^*$ in the computation of $f$ by the team $\cA^f$, such that $t^*$ is after round $t$ from Claim 2,  all agents from $\cA^f$  reach the node  $f(x_1, \dots, x_k)$ in round $t^*$, and transit to state $STOP$.

Consider any agent $a\in \cA^{h_j}_i$, for $1 \leq i \leq k$ and any agent $b \in \cA_j^g$, for $1 \leq j \leq l$. The agent $c(a,b)$ starts the computation of $f$ at the node  $x_i$. It reaches the node  $h_j(x_1, \dots, x_k)$ in state $(S_{b}, a)$, by Claim 1. By Claim 2, there exists a round $t$ in the computation of $f$ such that the agents $c(a,b_j)$ from $\cA^f$ corresponding to agents $b_j$ from each group $\cA^g_j$ are at the root. In round $t$, the agents at the root must have come from the nodes  $y_1, \dots, y_l$, i.e., from the nodes  $h_1(x_1, \dots, x_k)$, $\dots$, $h_l(x_1, \dots, x_k)$.
The agents $c(a,b)$, for $b \in \cA_j^g$ and $b \neq b_j$, are already located at $h_j(x_1, \dots, x_k)$, for $1 \leq j \leq l$ by round $t$, and they are in some states $(S,a)$. Hence,
the transition function $\delta^f$ follows the transition function $\delta^g$ from round $t$ on. It follows that all agents $c(a,b)$ from $\cA^f$ corresponding to agents $b$ from  $\cA^g$
eventually reach the node  $g(y_1, \dots, y_l)=f(x_1, \dots, x_k)$ in some round $t^* > t$.
Then they all transit to state $STOP$. This proves the claim.

Claim 3 implies that the
team $\cA^f$  of agents performs a computation of the function $f: \mathbb{N}^{k}\to \mathbb{N}$. It remains to show that this computation is synchronized. Indeed, the synchronization round in the computation of $h_1$ by the composite agents is also the synchronization round in the computation of $f$. Consider agents $a_i\in \cA^{h_1}_i$, for $1 \leq i \leq k$, that are the synchronizers for $h_1$. Then
the composite agents $c(a_i,b)$ are the synchronizers of $f$. This completes the proof.
\end{proof}

\section{Computing Primitive Recursion of Functions}
In this section, we show how to perform a synchronized computation of a function $f: \mathbb{N}^{k+1}\to \mathbb{N}$ obtained by primitive recursion from functions $h: \mathbb{N}^{k}\to \mathbb{N}$ and $g: \mathbb{N}^{k+2}\to \mathbb{N}$ by a team of agents modelled as deterministic finite automata, assuming that the functions $h$ and $g$ have a synchronized computation by such a team. The function $f$ is defined by the formula
\begin{align*}
&f(x_1, x_2, \ldots, x_k, 0) = h(x_1, x_2, \ldots, x_k),\\
&f(x_1, x_2, \ldots, x_k, y+1) = g(x_1, x_2, \ldots, x_k, y, f(x_1, x_2, \ldots, x_k, y)),
\end{align*}
for all $(x_1, x_2, \ldots, x_k, y) \in \mathbb{N}^{k+1}$.
Suppose the function $h$ can be computed by a set of agents $\cA^{h}=\cA_1^{h}\cup \cA_2^{h}\cup \dots \cup \cA_k^{h}$,
where agents from the set $\cA_i^{h}$ are initially located at the input argument $x_i$, for $1\leq i\leq k$. At the end of the computation of $h$, all the agents from $\cA^{h}$ are located at $h(x_1, x_2, \dots, x_k)$.
Similarly, let $\cA^g=\cA_1^g\cup \cA_2^g\cup \dots \cup \cA_{k+2}^g$ be the set of agents that compute the function $g$, where agents from the set $\cA^g_i$, are initially located at the input argument $x_i$ of $g$. At the end of the computation of $g$, all the agents from $\cA^g$ are located at $g(x_1, x_2, \dots, x_{k+2})$.

We define the set of agents $\cA^f$ computing the function $f$ as follows. There are four categories of agents. The first category consists of {\em composite agents}. These are agents of the type $d(a,b)$, for  $a \in \cA^{h}$ and $b \in \cA^g$. Most of their behavior is determined by the behavior of agents $a$ and $b$, to be precisely defined in the sequel. The second category consists of  {\em argument holders}. There are $k+1$ of them. These agents stay at nodes  $x_1,\dots, x_k$ and $y+1$, for most of the computation, and then go to the node corresponding to the final result. The argument holder for $x_i$ is called $q_i$, for $1\leq  i \leq k$, and the argument holder for $y+1$ is called $q_{k+1}$.
The third category consists of one agent called the {\em counter}. Its role is to keep track of the current recursion step, starting from 0 up to $y+1$. Finally, the last category consists of one agent, called the {\em conductor}. The role of the conductor is to coordinate the entire recursion process.

The main difficulty in organizing recursive computation is that the function $g$ has to be computed many times for different arguments whose values have been computed previously. This is why coordination by the conductor is needed to tell the composite agents which part of the recursive computation they have to perform at which stage.

The general overview of the computation can be summarized as follows.
First, all agents except argument holders, synchronize at the root. The rest of the computation proceeds in $y+2$ phases, corresponding to recursion steps.
The counter keeps track of the phase number $0,1,\dots,y+1$, by moving one step right in each phase, until it sees the argument holder for $y+1$. (It could not do this by changing states, as $y$ is a variable with unbounded values).
In each phase $p$, the composite agents compute the value of $f$ for the $p$th recursion step. Then some of the composite agents
remain at the node $v_p$ corresponding to this value, and other composite agents
 go back to the root, to participate in the coordination by the conductor that prepares the next phase.
The conductor initiates the current recursion step and then visits the node $v_p$.  In each phase, the conductor visits the counter causing it to increment by moving one step right.  After the last phase, the conductor causes all agents to move to the node corresponding to the final computed value.

We now describe the behavior of the agents in each phase. 
At the beginning of phase 0, the counter and the conductor are located at the node  $y+1$, and they move to the root together.
The composite agents $d(a,b)$, for $a \in \cA^{h}$ and $b \in \cA^g$, are initially located at the nodes  $x_1,\dots, x_k$. 
In order for the computation of $f$ to be synchronized, each composite agent $d(a,b)$ first moves to the root and transits to state $(begin,a,b)$. Once the conductor sees every composite agent $d(a,b)$ at the root in state $(begin,a,b)$, it transits to state $start$.
Then the counter moves one step right.
 Each composite agent $d(a,b)$ goes right until it sees the argument holder $q_i$, for $a\in \cA^h_i$, for $1\leq  i \leq k$ .
Each composite agent transits to a state  corresponding to the start of computation of $h$, and
the composite agents compute $h(x_1,\dots, x_k)$ by reaching the node corresponding to this value. 
Each composite agent $d(a,b)$ transits to state $(computed,a,b)$ after reaching this node.
Let $\cB$ be the set of composite agents $d(a,b)$, such that $a\in \cA^h$ and $b\in \cA^g_1 \cup \dots \cup \cA^g_{k}$, let $\cC$ be the set of composite agents $d(a,b)$, such that $a\in \cA^h$ and $b\in \cA^g_{k+1}$, and  let $\cD$ be the set of composite agents $d(a,b)$, such that $a\in \cA^h$ and $b\in \cA^g_{k+2}$.
After reaching the node  $h(x_1, \dots, x_k)$, the composite agents from $\cB \cup \cC$ move to the root and transit to state
$(endPhase,a,b)$. The agents from $\cD$ remain at the node  $h(x_1, \dots, x_k)$ and transit to state
$(endPhase,a,b)$.
This completes phase 0.

Each phase $p\geq 1$ starts in a crucial round called the {\em coordination round} of phase $p$. This round is defined as follows. Once the conductor sees every composite agent $d(a,b)$ from $\cB \cup \cC$ at the root, in state $(endPhase,a,b)$, it transits to the state $coord$.
The round when it happens is the coordination round of phase $p$.  

In the coordination round of phase $p$, for $p=1,\dots, y+1$, the counter is located at the node  $p$.
When the composite agents from $\cB$ see the conductor at the root in state $coord$, they move right from the root to reach nodes  $x_1, \dots, x_k$. Note that argument holders $q_i$ are at $x_i$, for $1 \leq i \leq k$. This enables the composite agents from $\cB$ to identify the respective nodes. 
Once the composite agent $d(a,b)$, where $b \in \cA^g_{i}$, reaches the node  $x_i$, for $1\leq i \leq k$, it transits to a state corresponding to the starting state of the computation of $g$ by $b$. 
The composite agents from $\cC$ move right until they find the counter. Then they move one step left (to the node  $p-1$).
Then composite agents from  $\cC$ transit to a state corresponding to the starting state of  the computation of $g$ by $b$. 

The conductor goes to the node  $f(x_1,\dots,x_k,p-1)$. It recognizes this node by the presence of composite agents $d(a,b)\in\cD$ in state $(endPhase,a,b)$.  It transits to state $step$. Once agents
from $\cD$ see the conductor in state $step$, they transit to a state corresponding to the starting state of  the computation of $g$ by $b$. 
Then the conductor goes back to the root. It moves
right from the root until it sees the counter. Then it transits to the state $increase$.
The counter, seeing the conductor in state $increase$, moves one step right, to the node ~$p+1$. 

All composite agents compute $f(x_1, \dots, x_k,p)=g(x_1, \dots, x_k,p-1,f(x_1, \dots, x_k,p-1))$ by reaching the node corresponding to this value, and each agent $d(a,b)$ transits to state $(computed,a,b)$. 
The composite agents from $\cB \cup \cC$ move to the root and transit to state
$(endPhase,a,b)$. The agents from $\cD$ remain at the node  $f(x_1, \dots, x_k,p)$ and transit to state
$(endPhase,a,b)$.
This completes phase $p$. 

In phase $p=y+1$, the conductor has met the counter at the node  $y+1$, where the argument holder $q_{k+1}$ was also present. This enables the conductor to ``learn'' that the current phase is the last phase of computation. Thus, the conductor transits to state $finish_{k+1}$. Upon seeing the conductor in state $finish_{k+1}$, the counter and the argument holder return to the root. The conductor also goes back to the root and transits to state $gather$.
At the end of phase $y+1$, all composite agents from $\cB \cup \cC$ are at the root in state $(endPhase,a,b)$. Upon seeing the conductor in state $gather$,
the agents from $\cB \cup \cC$, the counter and the argument holder $q_{k+1}$ move to the node  $f(x_1, \dots, x_k, y+1)$. They recognize this node by the presence of composite agents from $\cD$.

At the root, the conductor transits to state $visit_i$, then moves right to find the argument holder $q_i$, for $1 \leq i \leq k$. Once the conductor sees the argument holder $q_i$, it transits to state $finish_i$. Then it moves to the root and transits to state $visit_{i+1}$.
Once an argument holder sees the conductor in state $finish_i$, it moves to the root,  and then to the node $f(x_1, \dots, x_k, y+1)$.  After transiting to state $finish_k$, the conductor moves to the root, transits to state $finish$. Then the conductor moves to the node  $f(x_1, \dots, x_k, y+1)$ and
transits to state $endComputation$. Then all the agents transit to the state $STOP$.

Now we give the details of the construction of all agents from $\cA^f$ computing the function $f$. We start with the argument holders.
The argument holders transit to state $(wait, q_i)$, for $1 \leq i \leq k+1$.
The argument holders stay idle at their initial location until they see the conductor in state $finish_i$ for $1 \leq i \leq k+1$. Then they go to the root and then move right until they see the composite agents from $\cD$. They wait there until they see the conductor again in state $endComputation$ before transiting to state $STOP$. 

Below is the pseudocode for the behavior of argument holders.

\begin{algorithm}[H]
    \caption{Behavior of argument holder $q_i$, for $1 \leq i \leq k+1$}
    \label{alg:pr-ah}
    $X = \{(endPhase,a,b): d(a,b)\in \cD\}$;\\
    transit to state $(wait, q_i)$;\\
    \texttt{conditional-wait($\{finish_i\}$)};\\
    \texttt{go-to-root};\\
    \texttt{conditional-right($X$)};\\
    \texttt{conditional-wait($\{endComputation\}$)};\\
    Stop;
\end{algorithm}

Next, we describe the detailed construction of the counter. This agent is initially located at the node $y+1$. It moves to the root. Once it sees the conductor in state $start$, it moves one step right and transits to state $count$.
Every time it sees the conductor in state $increase$, it moves one step right. 
When it sees the argument holder $q_{k+1}$, it transits to state $lastPhase$.
Once it sees the conductor in state $finish_{k+1}$, it goes back to the root. Then it goes right to find agents from $\cD$. Once it sees the conductor in state $endComputation$, it transits to state $STOP$. 

Below is the pseudocode for the behavior of the counter.

\begin{algorithm}[H]
    \caption{Behavior of the counter}
    \label{alg:pr-counter}
    $X=\{(endPhase,a,b): d(a,b)\in\cD\}$;\\
    \texttt{go-to-root};\\
    \texttt{conditional-wait($\{start\}$)};\\
    \texttt{push};\\
    \Repeat{counter sees agent $q_{k+1}$ in state $(wait,q_{k+1})$}{
        transit to state $count$;\\
    	\texttt{conditional-wait($increase$)};\\
	    \texttt{push};\\
    }
    transit to state $lastPhase$;\\
    \texttt{conditional-wait($\{finish_{k+1}\}$)};\\
    \texttt{go-to-root};\\
    \texttt{conditional-right($X$)};\\
    \texttt{conditional-wait($\{endComputation\}$)};\\
    Stop;
\end{algorithm}

Next, we describe the detailed construction of the conductor. It is initially located at the node $y+1$. It moves to the root. 
Once it sees all the composite agents $d(a,b)$ in state $(begin,a,b)$, it transits to state $start$. Then it waits until it sees composite agents from $\cB\cup\cC$ in state $(endPhase,a,b)$, and transits to state $coord$. It moves right to find the composite agents from $\cD$. 
It transits to state $step$. Then it goes back to the root and again moves right to find the counter. When it sees the counter, it transits to state $increase$ and goes back to the root. This process is repeated until the conductor sees the counter in state $lastPhase$. Then it transits to state $finish_{k+1}$. It goes back to the root and transits to state $gather$. Then it sequentially transits to state $visit_i$, for $1 \leq i \leq k$, and moves right to find agent $q_i$. Upon meeting $q_i$, it transits to state $finish_i$ and then it goes back to the root. Then the conductor moves right to find the composite agents from $\cD$. Once it sees them all, it transits to state $endComputation$. Then it transits to state $STOP$.

Below is the pseudocode for the behavior of the conductor.

\begin{algorithm}[H]
    \caption{Behavior of the conductor}
    \label{alg:pr-conductor}
    $W=\{(begin,a,b): d(a,b)\in\cB\cup\cC\}$;\\ $X=\{(endPhase,a,b): d(a,b)\in\cD\}$;\\ $Y=\{(endPhase,a,b): d(a,b)\in\cB\cup\cC\}$;\\
    \texttt{go-to-root};\\
    \texttt{conditional-wait($W$)};\\
    transit to state $start$;\\
    \texttt{conditional-wait($Y$)};\\
    transit to state $coord$;\\
    \Repeat{the conductor sees the counter in state $lastPhase$}{
        \texttt{conditional-right($X$)};\\
        transit to state $step$;\\
        \texttt{go-to-root};\\
        \texttt{conditional-right($\{count\}$)};\\
        transit to state $increase$;\\
        \texttt{go-to-root};\\
    }
    transit to state $finish_{k+1}$;\\
    \texttt{go-to-root};\\

    \For{$i=1$ to $k$}{
        transit to state $visit_i$;\\
        \texttt{conditional-right($\{(wait,q_i)\}$)};\\
        transit to state $finish_i$;\\
        \texttt{go-to-root};\\
    }
    \texttt{conditional-right($X$)};\\
    transit to state $endComputation$;\\
    Stop;
\end{algorithm}

Finally, we describe the detailed construction of the composite agents $d(a,b)$, for  $a \in \cA^{h}_i$, for $1 \leq i \leq k$, and $b \in \cA^g_{j}$, for $1 \leq j \leq k+2$. At a high level, the construction of composite agents is similar to the construction used for the operation of  composition of functions in Section~\ref{sec:composition}, along with additional states to handle the primitive recursion. 


The composite agents perform three types of movements corresponding to the computation of different functions: the computation of $h$, the computation of $g$ and the computation of $f$. Each composite agent $d(a,b)$ follows the behavior of agent $a$ when it performs movements corresponding to the computation of $h$, and follows the behavior of agent $b$ when it performs movements corresponding to the computation of $g$. Let $\Sigma(a)$ and $\Sigma(b)$ be the slices of states of agents $a$ and $b$, computing the functions $h$ and $g$, respectively. Each agent $d(a,b)$ has states of the form $(S,b)$, where $S\in \Sigma(a)$ or of the form $(S',a)$, where $S'\in \Sigma(b)$, that it uses to follow the behavior of agents $a$ and $b$, respectively.
We have $(S_a,b)$ as the starting state corresponding to the computation of $h$ for agent $d(a,b)$, where $S_a$ is the starting state of agent $a$ computing $h$. Similarly, we have $(S_b,a)$ as the starting state corresponding to the computation of $g$ for agent $d(a,b)$, where $S_b$ is the starting state of agent $b$ computing $g$. The transition function $\delta^f$ and the output function $\lambda^f$ are defined as follows.

Suppose that agent $d(a,b)$ is in some state $(S,b)$ such that $S \in \Sigma(a)$. 
We define the subset $Z^h$ of $\Sigma(a)$ as follows:  $Z^h=\{S'\in \Sigma(a): \forall b \in \cA^g\,\, (S',b) \in Z \}$, where $(deg,Z)$ is the input of the composite agent $d(a,b)$.
We define the values of the transition function $\delta^f$  for state $(S,b)$ as follows:
$$
\delta^f((S,b), (deg, Z)) =
\begin{cases}
(\delta^{h}(S, (deg, Z^{h})),b), &\text{ if } \delta^{h}(S, (deg, Z^{h})) \neq STOP\\
(computed,a,b), &\text{ if } \delta^{h}(S, (deg, Z^{h})) = STOP
\end{cases}
$$
Similarly, suppose that agent $d(a,b)$ is in some state $(S,a)$ such that $S \in \Sigma(b)$.
We define the subset $Z^g$ of $\Sigma(b)$ as follows:  $Z^g=\{S'\in \Sigma(b): \forall a \in \cA^h\,\, (S',a) \in Z \}$, where
$(deg,Z)$ is the input of the composite agent $d(a,b)$.
We define the values of the transition function $\delta^f$  for state $(S,a)$ as follows:
$$
\delta^f((S,a), (deg, Z)) =
\begin{cases}
(\delta^{g}(S, (deg, Z^{g})),a), &\text{ if } \delta^{g}(S, (deg, Z^{g})) \neq STOP\\
(computed,a,b), &\text{ if } \delta^{g}(S, (deg, Z^{g})) = STOP
\end{cases}
$$

We define the corresponding output function $\lambda^f$ as follows:
$$\lambda^f((S,b),(deg,Z)) = \lambda^{h}(S, (deg, Z^{h})),$$ 
$$\lambda^f((S,a),(deg,Z)) = \lambda^{g}(S,(deg,Z^{g})).$$

Now, we define two procedures that are used by the composite agents $d(a,b)$ to follow the behavior of agents $a$ and $b$, respectively. The procedure \texttt{compute-h} is used by the composite agent $d(a,b)$ to follow the behavior of agent $a$ computing $h$ (both regarding the transition function and the output function) and transit to state $(computed,a,b)$ at the end. Similarly, the procedure \texttt{compute-g} is used by the composite agent $d(a,b)$ to follow the behavior of agent $b$ computing $g$ and transit to state $(computed,a,b)$ at the end.

\begin{algorithm}[H]
    \SetAlgorithmName{Procedure}{Procedure}{list of procedures name}
    \caption{\texttt{compute-h} for agent $d(a,b)$}
    \label{alg:pr-compute-h}
    \While{$\delta^f((S,b), (deg, Z)) \neq (computed,a,b)$}{
        $(S,b) \gets \delta^f((S,b), (deg, Z))$;\\
        move according to $\lambda^f((S,b), (deg, Z))$;\\
    }
    transit to state $(computed,a,b)$;
\end{algorithm}

\begin{algorithm}[H]
    \SetAlgorithmName{Procedure}{Procedure}{list of procedures name}
    \caption{\texttt{compute-g} for agent $d(a,b)$}
    \label{alg:pr-compute-g}
    \While{$\delta^f((S,a), (deg, Z)) \neq (computed,a,b)$}{
        $(S,a) \gets \delta^f((S,a), (deg, Z))$;\\
        move according to $\lambda^f((S,a), (deg, Z))$;\\
    }
    transit to state $(computed,a,b)$;
\end{algorithm}

The behavior of composite agents $d(a,b)$ when they perform movements corresponding to the computation of $f$ is as follows.

A composite agent $d(a,b)$ is initially located at the node $x_i$, for $1\leq i \leq k$, where $a \in \cA^h_i$. It moves to the root and transits to state $(begin,a,b)$. Once it sees the conductor in state $start$, it moves right to find the argument holder $q_i$. Upon seeing $q_i$, it transits to state $(S_a,b)$. Then it follows the procedure \texttt{compute-h}. Once it reaches the node $h(x_1, \dots, x_k)$, it transits to state $(computed,a,b)$. 
The composite agents from $\cB \cup \cC$ move to the root and transit to state
$(endPhase,a,b)$. The agents from $\cD$ remain at the node $h(x_1, \dots, x_k,)$ and transit to state
$(endPhase,a,b)$.
This is the end of phase 0 for composite agents.

If $d(a,b) \in \cB$, where $b\in \cA^g_i$, then every time this agent sees the conductor in state $coord$, for some phase $p\geq 1$, it moves right to find $q_i$. It transits to state $(S_b,a)$. Then it follows the procedure \texttt{compute-g}. 
If $d(a,b) \in \cC$, then every time this agent sees the conductor in state $coord$, it moves right to find the counter. Then it moves one step left and transits to state $(S_b,a)$. It performs procedure \texttt{compute-g}. 

Subsequently, if $d(a,b) \in \cB \cup \cC$, the agent acts as follows. Once it reaches the node $g(x_1, \dots, x_k, p, f(x_1, \dots, x_k,p))$, it transits to state $(computed,a,b)$. Then it moves to the root and transits to state $(endPhase,a,b)$. Once it sees the conductor in state $gather$, it moves right to find composite agents from $\cD$. Once it sees the conductor in state $endComputation$, it transits to state $STOP$.

Below is the pseudocode for the behavior of composite agents $d(a,b) \in \cB \cup \cC$.

\begin{algorithm}[H]
    \caption{Behavior of composite agents $d(a,b) \in \cB \cup \cC$, for $a \in \cA^h_i$ and $b \in \cA^g_j$, for $1 \leq i \leq k$ and $1 \leq j \leq k+1$}
    \label{alg:pr-composite-b}
    $X=\{(endPhase,a,b): d(a,b)\in\cD\}$;\\
    \texttt{go-to-root};\\
    transit to state $(begin,a,b)$;\\
    \texttt{conditional-wait($\{start\}$)};\\
    \texttt{conditional-right($\{(wait,q_i)\}$)};\\
    transit to state $(S_a,b)$;\\
    \texttt{compute-h};\\
    \texttt{go-to-root};\\
    transit to state $(endPhase,a,b)$ \tcc*{end of phase 0}
    \Repeat{$d(a,b)$ sees the conductor in state $gather$}{
        \texttt{conditional-wait($\{coord\}$)};\\
        \If{$d(a,b) \in \cB$}{
            \texttt{conditional-right($\{(wait,q_i)\}$)};\\
        }
        \If{$d(a,b) \in \cC$}{
            \texttt{conditional-right($\{count\}$)};\\
            move one step left;\\
        }
        transit to state $(S_b,a)$;\\
        \texttt{compute-g};\\
        \texttt{go-to-root};\\
        transit to state $(endPhase,a,b)$;
    }
    \texttt{conditional-wait($\{gather\}$)}\tcc*{end of phase $y+1$}
    \texttt{conditional-right($X$)};\\
    \texttt{conditional-wait($\{endComputation\}$)};\\
    Stop;
\end{algorithm}

If $d(a,b) \in \cD$, then every time this agent sees the conductor in state $step$, it transits to state $(S_b,a)$. It performs procedure \texttt{compute-g}. Once it reaches the node $g(x_1, \dots, x_k, p, f(x_1, \dots, x_k,p))$, it transits to state $(computed,a,b)$. Then it transits to state $(endPhase,a,b)$. Once it sees the conductor in state $endComputation$, it transits to state $STOP$.

Below is the pseudocode for the behavior of composite agents $d(a,b) \in \cD$.

\begin{algorithm}[H]
    \caption{Behavior of composite agents $d(a,b) \in \cD$, for $a \in \cA^h_i$ and $b \in \cA^g_{k+2}$, for $1 \leq i \leq k$}
    \label{alg:pr-composite-d}
    \texttt{go-to-root};\\
    transit to state $(begin,a,b)$;\\
    \texttt{conditional-wait($\{start\}$)};\\
    \texttt{conditional-right($\{(wait,q_i)\}$)};\\
    transit to state $(S_a,b)$;\\
    \texttt{compute-h};\\
    transit to state $(endPhase,a,b)$;\\
    \Repeat{$d(a,b)$ sees the conductor in state $endComputation$}{
        \texttt{conditional-wait($\{step\}$)};\\
        transit to state $(S_b,a)$;\\
        \texttt{compute-g};\\
        transit to state $(endPhase,a,b)$;\\
    }
    \texttt{conditional-wait($\{endComputation\}$)};\\
    Stop;
\end{algorithm}

This completes the construction of agents in $\cA^f$ that compute the function $f$ obtained from functions $h$ and $g$ by the operation of primitive recursion.
The following theorem shows that the above constructed team $\cA^f$ of agents performs synchronized computation of the function $f: \mathbb{N}^{k+1}\to \mathbb{N}$.  

\begin{theorem}
\label{thm:pr}
Suppose that teams of agents $\cA^h$ and $\cA^g$ perform synchronized computation of functions $h: \mathbb{N}^{k}\to \mathbb{N}$ and $g: \mathbb{N}^{k+2}\to \mathbb{N}$, respectively. Then there exists a team of agents $\cA^f$ that performs synchronized computation of the function $f: \mathbb{N}^{k+1}\to \mathbb{N}$.
\end{theorem}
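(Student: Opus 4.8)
\quad The plan is to prove correctness by induction on the phase number and to read off the synchronization of $f$ from the base phase. Write $n$ for the value of the last argument of $f$; the case $n=0$ reduces to the computation of $h$ and is handled by phase $0$ alone, so I assume the last argument equals $y+1\ge 1$ and that phases $0,1,\dots,n$ occur. The invariant I would carry is the following configuration at the coordination round of each phase $p$ with $1\le p\le n$: the composite agents of $\cD$ sit at the node $f(x_1,\dots,x_k,p-1)$ in state $(endPhase,a,b)$, the composite agents of $\cB\cup\cC$ sit at the root in state $(endPhase,a,b)$, the counter is at node $p$, the holders $q_1,\dots,q_k$ are at $x_1,\dots,x_k$ and $q_{k+1}$ is at $n$, and the conductor is at the root about to issue $coord$. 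Everything then reduces to establishing this for $p=1$, propagating it from $p$ to $p+1$, and, once phase $n$ has left $\cD$ at the final value, carrying out the terminal gathering.

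For the base phase I would argue that the composite agents reproduce the synchronized computation of $h$. After gathering at the root in state $(begin,a,b)$ and receiving $start$, each $d(a,b)$ with $a\in\cA^h_i$ walks right to $q_i$ and enters $(S_a,b)$ at node $x_i$; since this matches the initial configuration of $\cA^h$ and since $\delta^f,\lambda^f$ copy $\delta^h,\lambda^h$ on states of type $(S,b)$, the team collectively computes $h$ and settles at $f(x_1,\dots,x_k,0)$, after which $\cD$ stays while $\cB\cup\cC$ return to the root, and the counter stands at node $1$. This is exactly the invariant for $p=1$. The same phase yields the synchronization round of $f$: at the synchronization round of the embedded $h$-computation the $h$-synchronizers (now composite agents) are at the root, and the conductor is simultaneously at the root, being held there by \texttt{conditional-wait($Y$)} until $\cB\cup\cC$ report $(endPhase,a,b)$; taking these composite agents as $m_1,\dots,m_k$ and the conductor (which starts at $n$, hence lies in $\cA^f_{k+1}$) as $m_{k+1}$ gives synchronizers of $f$ at the root in that round.

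For the inductive step I would follow phase $p\ge 1$ round by round under the conductor's orchestration. From state $coord$ the agents of $\cB$ walk to the holders $q_i$ and occupy $x_1,\dots,x_k$, the agents of $\cC$ walk to the counter at node $p$ and step once left to node $p-1$, and the agents of $\cD$, already resting at $f(x_1,\dots,x_k,p-1)$, begin their role upon the conductor's $step$ signal; these are exactly the $k+2$ arguments $x_1,\dots,x_k,\,p-1,\,f(x_1,\dots,x_k,p-1)$ of $g$, with $\cB$ on the first $k$, $\cC$ on the $(k+1)$-th, and $\cD$ on the last. Using that $\delta^f,\lambda^f$ copy $\delta^g,\lambda^g$ on states of type $(S,a)$, the team computes $g$ and gathers at $g(x_1,\dots,x_k,p-1,f(\dots,p-1))=f(x_1,\dots,x_k,p)$, after which $\cB\cup\cC$ return to the root in $(endPhase,a,b)$ and $\cD$ remains. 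For $1\le p<n$ the conductor's $increase$ then advances the counter to node $p+1$, re-establishing the invariant; at the end of phase $n-1$ this advance brings the counter to node $n$, where it meets $q_{k+1}$ and enters $lastPhase$, so in phase $n$ the conductor finds the counter in $lastPhase$ and switches to gathering. There I would verify that the conductor's fixed sequence $finish_{k+1},gather,visit_1,finish_1,\dots,visit_k,finish_k,finish,endComputation$ successively brings the counter, $q_{k+1}$, the $\cB\cup\cC$ agents, and each $q_i$ to the node $f(x_1,\dots,x_k,n)$ marked by $\cD$, after which every agent stops.

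The hard part, and the point where the hypothesis of a \emph{synchronized} rather than merely correct computation of $g$ is essential, is the claim inside each phase that the team genuinely computes $g$ even though its agents enter their $g$-starting states $(S_b,a)$ in different rounds: the $\cB$ agents start on reaching their holders, the $\cC$ agents on reaching the counter and stepping left, and the $\cD$ agents only after $step$. Two simulated $g$-agents that were out of phase would misread one another through the set $Z^g$, so I must show that they re-align before any move becomes irreversible. The leverage is that a synchronized computation gathers its synchronizers at the root and holds them there by \texttt{conditional-wait} barriers until the synchronization round; in particular the synchronizer of the last group of $g$ is simulated by $\cD$ and cannot reach the root before $step$, while the early-arriving synchronizers from $\cB\cup\cC$ are pinned at the root, so no synchronizer crosses the synchronization round before all have started, and the round in which they stand together at the root is a common checkpoint from which the whole team advances in lockstep exactly as in the stand-alone computation of $g$. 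I expect the cleanest route to rigor is to strengthen the inductive hypothesis, recording that a synchronized computation holds its agents at the root until the synchronization round and thereafter proceeds in lockstep; granting this, correctness of each phase---and hence of the induction---follows from the correctness of the computation of $g$.
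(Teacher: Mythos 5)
Your proposal takes essentially the same route as the paper's proof, which also consists of three steps: exhibiting synchronizers for $f$ (the paper's Claim~1), an induction on the phase number showing that at the end of phase $p$ the agents of $\cD$ occupy the node $f(x_1,\dots,x_k,p)$ (the paper's Claim~2, of which your coordination-round invariant is a more detailed version), and the terminal gathering driven by the conductor's $finish_{k+1}$, $gather$, $visit_i$/$finish_i$, $endComputation$ sequence (the paper's Claim~3). Your reading of the construction --- the roles of $\cB$, $\cC$, $\cD$, the counter and the conductor, and the phase-by-phase orchestration --- is accurate.

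The divergence, and the gap, lies in how you discharge what you rightly call the hard part: within a phase the simulated $g$-agents enter their starting states $(S_b,a)$ in different rounds, so each phase is a \emph{staggered-start} run of $\cA^g$, not the run hypothesized. Your resolution is to ``grant'' that a synchronized computation holds its agents at the root until the synchronization round and proceeds in lockstep thereafter. That property is not implied by the paper's definition of synchronized computation, which asks only that there exist \emph{one} round in which \emph{one} designated agent per group is at the root; it says nothing about agents being held anywhere, nor about robustness to start offsets. Indeed, the granted property is false for the paper's own basic teams: in the successor and projection algorithms the $\cB$-agents wait at $x+1$, respectively at $x_i$, and never visit the root before the synchronization round. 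So what you have sketched is a proof of a \emph{different} theorem, with a stronger hypothesis; making it a proof of the stated one would require formulating the stronger notion correctly (essentially, correctness under arbitrary staggered starts), verifying it for the basic functions, and re-proving that composition and primitive recursion preserve it --- a redo of the paper's entire structural induction. Two further remarks. First, your choice of synchronization round (that of the embedded $h$-computation) itself depends on this unproven lockstep property, since the $h$-simulation is also started in a staggered way; the paper's Claim~1 avoids this entirely by taking the \emph{initial} gathering --- the round in which the conductor sees all composite agents in state $(begin,a,b)$ and the counter at the root, before any simulation begins --- and you should do the same. Second, in fairness, the paper's own proof of its Claim~2 elides exactly the point you isolated (it invokes the synchronization of $g$ and then simply asserts that all composite agents reach $f(x_1,\dots,x_k,p+1)$), so your real contribution is to have made the paper's hidden assumption explicit; but naming an assumption is not the same as discharging it.
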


\begin{proof}
    The proof is split into three claims. Consider the construction of the team of agents $\cA^f$ as described above. 

    \noindent\textbf{Claim 1.} There exist agents in $\cA^f$ that are synchronizers for the function $f$. 
    
    In the beginning, all agents except the argument holders move to the root. 
    The conductor transits to state $start$ only after it sees the counter and all composite agents $d(a,b)$ in state $(begin,a,b)$ at the root.    The composite agents and the counter wait until the conductor is in state $start$. 
    In this round, at least one agent from each of the sets $\cA^f_i$, for all $1\leq  i \leq k+1$, 
    is at the root. Thus, these agents are synchronizers for the function $f$. This proves the claim.
    
     \noindent\textbf{Claim 2.} At the end of phase $p=0,1,\dots, y+1$, composite agents from the set $\cD$ are at the node  $f(x_1,\dots,x_k,p)$.
     
     First consider the case $p=0$. Once the composite agents see the conductor in state $start$, they move to nodes corresponding to
     $x_1,\dots,x_k$ and then compute $h(x_1,\dots,x_k)$ by going to the node corresponding to this value. Agents from $\cD$ stay at this node till the end of phase 0.
     
     Next, suppose by induction that at the end of phase $p\leq y$,  composite agents from the set $\cD$ are at the node corresponding
     to $f(x_1,\dots,x_k,p)$. Consider phase $p+1$. In the first round of this phase, the agents from $\cB\cup\cC$ are at the root. The conductor transits to state $coord$. Then agents from $\cB\cup\cC$ move to nodes corresponding to $x_1,\dots,x_k,p$. The conductor visits agents from $\cD$ at the node $f(x_1,\dots,x_k,p)$. Since the computation of $g$ is synchronized, there exists a subsequent round $t'$ with the property that at least one composite agent $d(a,b)$, such that $b\in \cA^g_i$, for all $1\leq  i \leq k+2$, is at the root.
Now all composite agents eventually reach the node $f(x_1, \dots, x_k,p+1)=g(x_1, \dots, x_k,p,f(x_1, \dots, x_k,p))$. Composite agents from the set $\cD$ stay at the node corresponding
     to $f(x_1,\dots,x_k,p+1)$ till the end of phase $p+1$. This proves the claim by induction.

%
%
%

    \noindent\textbf{Claim 3.} There exists a round after phase $y+1$ such that all agents from $\cA^f$  reach the node $f(x_1, \dots, x_k, y+1)$ and transit to state $STOP$. 
    
    In some round of phase $y+1$, the conductor sees the counter and the argument holder for $y+1$ together. 
    By Claim 2, all agents from $\cD$ are at the node $f(x_1, \dots, x_k, y+1)$ upon completion of phase $y+1$.
    Later, the conductor meets all composite agents from $\cB\cup\cC$ at the root and  transits to state $gather$. Then the conductor visits all argument holders triggering them to go to the node $f(x_1, \dots, x_k, y+1)$ and goes there itself. Then all agents transit to state $STOP$. This concludes the proof of the claim.

    The theorem follows from Claims 1 and 3.
\end{proof}

In Section \ref{sec:basicmovement}, we proved that all basic primitive recursive functions can be computed by teams of automata, and 
these computations are synchronized. Theorems \ref{thm:comp} and \ref{thm:pr} show that the class of functions that can be computed by teams of automata so that the computation is synchronized, is closed under operations of composition and of primitive recursion, respectively. In view of the definition of the class of primitive recursive functions, we get the following corollary which is the main result of this paper.

\begin{corollary}
Every primitive recursive function $f: \mathbb{N}^{k}\to \mathbb{N}$ can be computed by some team of automata.
\end{corollary}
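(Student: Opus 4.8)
The plan is to derive the corollary as a direct structural induction over the definition of the class of primitive recursive functions, assembling the pieces already established in the preceding sections. Since the class of primitive recursive functions is defined (Definition on page \pageref{sec:primitive}) as the \emph{smallest} class containing the basic functions and closed under composition and primitive recursion, it suffices to verify that the property ``$f$ admits a synchronized computation by a team of automata'' holds for each basic function and is preserved by both operations. I would state this slightly stronger property---synchronized computability rather than mere computability---as the induction invariant, because this is exactly the hypothesis required by Theorems \ref{thm:comp} and \ref{thm:pr}, and because the notion of a synchronized computation was introduced precisely so that arguments produced by subordinate computations are guaranteed to be in place before they are consumed.

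First I would record the base case. Section \ref{sec:basicmovement} establishes that the zero function (Algorithm \ref{alg:zero-function}), the successor function (the theorem following Algorithm \ref{alg:succ}), and the projection functions (the theorem following Algorithm \ref{alg:pf}) each admit a synchronized computation by a team of automata. Thus every basic primitive recursive function lies in the class $\cS$ of functions admitting synchronized computation by teams of automata. Next I would handle the inductive step for composition: if $g,h_1,\dots,h_l\in\cS$, then Theorem \ref{thm:comp} produces a team $\cA^f$ performing a synchronized computation of the composite function $f$, so $f\in\cS$. Symmetrically, for primitive recursion, if $h\in\cS$ and $g\in\cS$, then Theorem \ref{thm:pr} produces a team performing a synchronized computation of the function $f$ obtained by primitive recursion, so again $f\in\cS$. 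Hence $\cS$ contains the basic functions and is closed under both operations.

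Since the class of primitive recursive functions is the smallest such class, it is contained in $\cS$; in particular every primitive recursive function admits a synchronized computation by a team of automata, and \emph{a fortiori} can be computed by a team of automata, which is the claim of the corollary.

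The step I expect to require the most care is not any single computation but the bookkeeping that makes the induction formally legitimate: one must confirm that the hypotheses of Theorems \ref{thm:comp} and \ref{thm:pr} match exactly the invariant being propagated---namely that they require \emph{synchronized} computations of the ingredient functions and deliver a \emph{synchronized} computation of the result, with the correct arities ($h_1,\dots,h_l:\mathbb{N}^k\to\mathbb{N}$ and $g:\mathbb{N}^l\to\mathbb{N}$ for composition, and $h:\mathbb{N}^k\to\mathbb{N}$, $g:\mathbb{N}^{k+2}\to\mathbb{N}$ for recursion). Because both theorems are stated conditionally on synchronized computability of their inputs and conclude with synchronized computability of their output, the invariant closes cleanly and no strengthening beyond ``synchronized'' is needed; the corollary then follows by invoking minimality of the primitive recursive class. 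All the genuine technical difficulty has already been discharged in the constructions of Sections \ref{sec:composition} and the primitive-recursion section, so the remaining argument is purely a matter of citing those results in the right order.
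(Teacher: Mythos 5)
Your proposal is correct and follows exactly the paper's own argument: the paper derives the corollary from the synchronized computations of the basic functions in Section~\ref{sec:basicmovement}, closure under composition and primitive recursion via Theorems~\ref{thm:comp} and~\ref{thm:pr}, and the minimality clause in the definition of the primitive recursive class. Your only addition---making the induction invariant (``synchronized'' computability rather than mere computability) explicit---is a faithful elaboration of what the paper leaves implicit, not a different route.
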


\section{Conclusion}

We showed that all primitive recursive functions can be computed by teams of cooperating automata. Let $\cP\cR$ be the class of primitive recursive functions $f: \mathbb{N}^{k}\to \mathbb{N}$, $\cT\cA$ the class of functions $f: \mathbb{N}^{k}\to \mathbb{N}$ computable by teams of automata, and $\cT\cC$ the class of Turing computable functions $f: \mathbb{N}^{k}\to \mathbb{N}$. Hence, we showed in this paper that $\cP\cR \subseteq \cT\cA$. On the other hand, we clearly have $\cT\cA \subseteq \cT\cC$ because teams of automata can be simulated by a Turing machine. 
It is well known that the inclusion 
$\cP\cR \subset \cT\cC $ is strict.
Thus the following problem arises naturally from our research. Which (or maybe both) of the inclusions
$\cP\cR \subseteq \cT\cA$  and $\cT\cA \subseteq \cT\cC$ is strict? (see Fig.~\ref{fig:hierarchy}) In other words, we state the following questions: 
\begin{itemize}
\item
Do there exist non primitive recursive functions computable by teams of automata?
\item
Can teams of automata compute all Turing computable functions? 
\end{itemize}

\begin{figure}[h]
    \centering
    \includegraphics[width=0.5\textwidth]{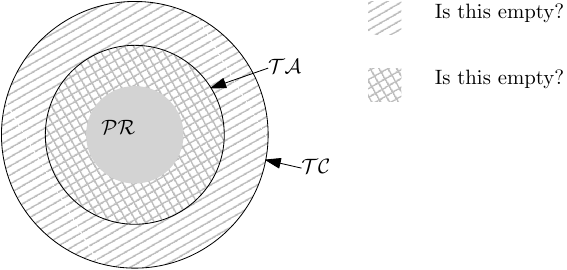}
    \caption{Relations between classes $\cP\cR$, $\cT\cA$ and $\cT\cC$ of functions}
    \label{fig:hierarchy}
\end{figure}

\bibliographystyle{plain}
\bibliography{references}

\begin{thebibliography}{10}

\bibitem{amir2023}
Talley Amir, James Aspnes, Petra Berenbrink, Felix Biermeier, Christopher Hahn,
  Dominik Kaaser, and John Lazarsfeld.
\newblock Fast {{Convergence}} of k-{{Opinion Undecided State Dynamics}} in the
  {{Population Protocol Model}}.
\newblock In {\em Proceedings of the 2023 {{ACM Symposium}} on {{Principles}}
  of {{Distributed Computing}}}, {{PODC}} '23, pages 13--23, {New York, NY,
  USA}, 2023.

\bibitem{angluin2004}
Dana Angluin, James Aspnes, Zoë Diamadi, Michael~J. Fischer, and René
  Peralta.
\newblock Computation in networks of passively mobile finite-state sensors.
\newblock In {\em Proceedings of the Twenty-Third Annual {{ACM}} Symposium on
  {{Principles}} of Distributed Computing}, {{PODC}} '04, pages 290--299, 2004.

\bibitem{beekman2004}
M.~Beekman, D.~J.~T. Sumpter, N.~Seraphides, and F.~L.~W. Ratnieks.
\newblock Comparing {{Foraging Behaviour}} of {{Small}} and {{Large Honey-Bee
  Colonies}} by {{Decoding Waggle Dances Made}} by {{Foragers}}.
\newblock {\em Functional Ecology}, 18(6):829--835, 2004.

\bibitem{boolos2002}
George~S. Boolos, John~P. Burgess, and Richard~C. Jeffrey.
\newblock {\em Computability and {{Logic}}}.
\newblock {Cambridge University Press}, March 2002.

\bibitem{brandt2020}
Sebastian Brandt, Jara Uitto, and Roger Wattenhofer.
\newblock A tight lower bound for semi-synchronous collaborative grid
  exploration.
\newblock {\em Distributed Computing}, 33(6):471--484, 2020.

\bibitem{cohen2017}
Lihi Cohen, Yuval Emek, Oren Louidor, and Jara Uitto.
\newblock Exploring an {{Infinite Space}} with {{Finite Memory Scouts}}.
\newblock In {\em Proceedings of the 2017 {{Annual ACM-SIAM Symposium}} on
  {{Discrete Algorithms}} ({{SODA}})}, Proceedings, pages 207--224. 2017.

\bibitem{culikii1994}
Karel Culik, II and Juhani Karhumäki.
\newblock Finite {{Automata Computing Real Functions}}.
\newblock {\em SIAM Journal on Computing}, 23(4):789--814, 1994.

\bibitem{das2020}
Shantanu Das, Paola Flocchini, Giuseppe Prencipe, and Nicola Santoro.
\newblock Forming {{Sequences}} of {{Patterns With Luminous Robots}}.
\newblock {\em IEEE Access}, 8:90577--90597, 2020.

\bibitem{das2015}
Shantanu Das, Paola Flocchini, Nicola Santoro, and Masafumi Yamashita.
\newblock Forming sequences of geometric patterns with oblivious mobile robots.
\newblock {\em Distributed Computing}, 28(2):131--145, 2015.

\bibitem{daymude2023}
Joshua~J. Daymude, Andr{\'e}a~W. Richa, and Christian Scheideler.
\newblock The canonical amoebot model: Algorithms and concurrency control.
\newblock {\em Distributed Computing}, 36(2):159--192, 2023.

\bibitem{derencourt1992}
D.~Derencourt, J.~Karhumäki, M.~Latteux, and A.~Terlutte.
\newblock On computational power of weighted finite automata.
\newblock In Ivan~M. Havel and Václav Koubek, editors, {\em Mathematical
  {{Foundations}} of {{Computer Science}} 1992}, Lecture {{Notes}} in
  {{Computer Science}}, pages 236--245, 1992.

\bibitem{derencourt1994}
D.~Derencourt, J.~Karhumäki, M.~Latteux, and A.~Terlutte.
\newblock On continuous functions computed by finite automata.
\newblock {\em RAIRO - Theoretical Informatics and Applications},
  28(3-4):387--403, 1994.

\bibitem{emek2015}
Yuval Emek, Tobias Langner, David Stolz, Jara Uitto, and Roger Wattenhofer.
\newblock How many ants does it take to find the food?
\newblock {\em Theoretical Computer Science}, 608:255--267, 2015.

\bibitem{feinerman2017}
Ofer Feinerman and Amos Korman.
\newblock The {{ANTS}} problem.
\newblock {\em Distributed Computing}, 30(3):149--168, 2017.

\bibitem{feldmann2022}
Michael Feldmann, Andreas Padalkin, Christian Scheideler, and Shlomi Dolev.
\newblock Coordinating {{Amoebots}} via {{Reconfigurable Circuits}}.
\newblock {\em Journal of Computational Biology}, 29(4):317--343, 2022.

\bibitem{flocchini2019}
Paola Flocchini, Giuseppe Prencipe, and Nicola Santoro.
\newblock {\em Distributed {{Computing}} by {{Mobile Entities}}}, volume~1 of
  {\em Theoretical {{Computer Science}} and {{General Issues}}}.
\newblock {Springer}, 2019.

\bibitem{izumi2012}
Taisuke Izumi, Samia Souissi, Yoshiaki Katayama, Nobuhiro Inuzuka, Xavier
  D{\'e}fago, Koichi Wada, and Masafumi Yamashita.
\newblock The {{Gathering Problem}} for {{Two Oblivious Robots}} with
  {{Unreliable Compasses}}.
\newblock {\em SIAM Journal on Computing}, 41(1):26--46, 2012.

\bibitem{kostitsyna2022}
Irina Kostitsyna, Christian Scheideler, and Daniel Warner.
\newblock Fault-{{Tolerant Shape Formation}} in the {{Amoebot Model}}.
\newblock In Thomas~E. Ouldridge and Shelley F.~J. Wickham, editors, {\em 28th
  {{International Conference}} on {{DNA Computing}} and {{Molecular
  Programming}} ({{DNA}} 28)}, volume 238 of {\em Leibniz {{International
  Proceedings}} in {{Informatics}} ({{LIPIcs}})}, pages 9:1--9:22, {Dagstuhl,
  Germany}, 2022.

\bibitem{luna2022}
Giuseppe A.~Di Luna, Paola Flocchini, Nicola Santoro, and Giovanni Viglietta.
\newblock {{TuringMobile}}: A turing machine of oblivious mobile robots with
  limited visibility and its applications.
\newblock {\em Distributed Computing}, 35(2):105--122, 2022.

\bibitem{mery2002}
Frederic Mery and Tadeusz~J. Kawecki.
\newblock Experimental evolution of learning ability in fruit flies.
\newblock {\em Proceedings of the National Academy of Sciences},
  99(22):14274--14279, 2002.

\bibitem{nokhanji2023}
Nooshin Nokhanji, Paola Flocchini, and Nicola Santoro.
\newblock Dynamic {{Line Maintenance}} by {{Hybrid Programmable Matter}}.
\newblock {\em International Journal of Networking and Computing},
  13(1):18--47, 2023.

\bibitem{pattanayak2021}
Debasish Pattanayak, John Augustine, and Partha~Sarathi Mandal.
\newblock Randomized gathering of asynchronous mobile robots.
\newblock {\em Theoretical Computer Science}, 858:64--80, 2021.

\bibitem{reynolds2006}
A.~M. Reynolds.
\newblock Cooperative random {{L\'evy}} flight searches and the flight patterns
  of honeybees.
\newblock {\em Physics Letters A}, 354(5):384--388, 2006.

\bibitem{sommer2004}
S.~Sommer and R.~Wehner.
\newblock The ant's estimation of distance travelled: Experiments with desert
  ants, {{Cataglyphis}} fortis.
\newblock {\em Journal of Comparative Physiology A}, 190(1):1--6, 2004.

\bibitem{suzuki1999}
Ichiro Suzuki and Masafumi Yamashita.
\newblock Distributed {{Anonymous Mobile Robots}}: {{Formation}} of {{Geometric
  Patterns}}.
\newblock {\em SIAM Journal on Computing}, 28(4):1347--1363, 1999.

\bibitem{terbeek2003}
Maurice~H. {ter Beek}, Clarence~A. Ellis, Jetty Kleijn, and Grzegorz Rozenberg.
\newblock Synchronizations in {{Team Automata}} for {{Groupware Systems}}.
\newblock {\em Computer Supported Cooperative Work (CSCW)}, 12(1):21--69, 2003.

\bibitem{terbeek2008}
Maurice~H. ter Beek, Fabio Gadducci, and Dirk Janssens.
\newblock A {{Calculus}} for {{Team Automata}}.
\newblock {\em Electronic Notes in Theoretical Computer Science}, 195:41--55,
  2008.

\end{thebibliography}

\end{document}